\newcommand{\eps}{\varepsilon}
\newcommand{\Ot}{\tilde{O}}
\def\polylog{\operatorname{polylog}}
\def\poly{\operatorname{poly}}
\newtheorem{lemma}{Lemma}
\newtheorem{theorem}{Theorem}
\newtheorem{corollary}{Corollary}
\newtheorem{conjecture}{Conjecture}
\theoremstyle{definition}
\newtheorem{definition}{Definition}
\title{On the Hardness of Partially Dynamic Graph Problems and Connections to
Diameter}
\author{Søren Dahlgaard\thanks{Research partly supported by Mikkel Thorup's
        Advanced Grant DFF-0602-02499B from the Danish Council for Independent
Research under the Sapere Aude research career programme.}}
\affil{University of Copenhagen\\\texttt{soerend@di.ku.dk}}
\date{}
\begin{document}

\maketitle

\begin{abstract}
    Conditional lower bounds for dynamic graph problems has received a great
    deal of attention in recent years. While many results are now known for the
    fully-dynamic case and such bounds often imply worst-case bounds for the
    partially dynamic setting, it seems much more difficult to prove amortized
    bounds for incremental and decremental algorithms. In this paper we
    consider partially dynamic versions of three classic problems in graph
    theory. Based on popular conjectures we show that:
    \begin{itemize}
        \item No algorithm with amortized update time $O(n^{1-\eps})$ exists
            for incremental or decremental maximum cardinality bipartite
            matching. This significantly improves on the $O(m^{1/2-\eps})$
            bound for sparse graphs of Henzinger et al. [STOC'15] and
            $O(n^{1/3-\eps})$ bound of Kopelowitz, Pettie and
            Porat\footnote{Kopelowitz et al. showed this
                result at SODA'16, and after posting their result online it was
                improved in an online version of the paper by Henzinger et al.
                Kopelowitz et al. also showed a slightly stronger bound of
            $O(n^{0.39-\eps})$ if node insertions are allowed.}. Our linear
            bound also appears more natural.
            In addition, the result we present separates
            the node-addition model from the edge insertion model, as an
            algorithm with total update time $O(m\sqrt{n})$ exists for the
            former by Bosek et al. [FOCS'14].
        \item No algorithm with amortized update time $O(m^{1-\eps})$ exists
            for incremental or decremental maximum flow in directed and
            weighted sparse graphs. No such lower bound was known for
            partially dynamic maximum flow previously. Furthermore no algorithm
            with amortized update time $O(n^{1-\eps})$ exists for directed and
            unweighted graphs or undirected and weighted graphs.
        \item No algorithm with amortized update time $O(n^{1/2 - \eps})$
            exists for incremental or decremental $(4/3-\eps')$-approximating
            the diameter of an unweighted graph. We also show a slightly stronger
            bound if node additions are allowed. The result is then extended to
            the static case, where we show that no $O((n\sqrt{m})^{1-\eps})$
            algorithm exists. We also extend the result to the case when an
            additive error is allowed in the approximation. While our bounds
            are weaker than the already known bounds of Roditty and Vassilevska
            Williams [STOC'13], it is based on a weaker conjecture of Abboud et
            al. [STOC'15] and is the first known reduction from the 3SUM
            and APSP problems to diameter.
            Showing an equivalence between APSP and diameter is a major open
            problem in this area (Abboud et al. [SODA'15]), and thus showing
            even a weak connection in this direction is of interest.
    \end{itemize}
\end{abstract}

\section{Introduction}
Arguably one of the most important goals of computer science is to understand
the complexity of natural computational problems. For many such problems we
know of polynomial time algorithms, but getting matching unconditional lower
bounds seem far beyond the scope of our current techniques. Therefore a recent
and very active line of research at top-level conferences concerns itself with
hardness results in the class \textbf{P}
\cite{Patrascu10,AbboudV14,AbboudVY15,HenzingerKNS15,KopelowitzPP16,AbboudGV15,ChechikLRSTW14,AbboudHVW16,CairoGR16,BringmannK15,BackursI15,AbboudBW15,AbboudBW15a}. Such
results are obtained by reducing from classic problems like 3SUM, APSP and
CNF-SAT, for which there exist very popular conjectures about the running time.
We call such a hardness result a \emph{conditional lower bound (CLB)} as it is
based (conditioned) on the truthfulness of some popular conjecture. The main
goal of CLBs is to explain barriers in algorithm development and provide
``warning signs'' that improving an algorithm for some problem has major and
surprising consequences for a classic problem like the ones mentioned above,
which researchers have worked on for decades, and trying to do so may be
ill-advised.

One particular area that has received a lot of attention from this perspective
is dynamic graph problems
\cite{RodittyZ11,Patrascu10,AbboudV14,AbboudVY15,HenzingerKNS15,KopelowitzPP16}.
In dynamic graph problems we are asked to maintain some property about a
graph such as reachability or shortest paths distances as the graph undergoes
changes (typically edge insertions and deletions). One may also consider the
partially dynamic cases where only edge insertions are allowed (incremental) or
edge deletions (decremental) or cases where node insertion and deletion is
allowed. Several conditional lower bounds are known for both partially and
fully dynamic problems such as shortest paths \cite{RodittyZ11,HenzingerKNS15},
maximum bipartite matching \cite{AbboudV14,HenzingerKNS15,KopelowitzPP16},
maximum flow \cite{AbboudVY15}, reachability
\cite{Patrascu10,AbboudV14,HenzingerKNS15}, and many more.

\subsection{Difficulties of partially dynamic}
Most of the research on CLBs for dynamic graph problems has been focused on the
fully dynamic case, however such results do not translate well into CLBs for
incremental or decremental algorithms. A typical reduction works by 1) building
a structured base graph, 2) for each element in some subset of the input
perform a series of insertions and queries to decide whether this element is in
a possible solution, 3) perform a series of deletions returning the graph to
its base state. From a partially dynamic perspective we may use the above
procedure to get similar \emph{worst-case} bounds, by keeping track of the data
structure state and simulating step 3 by \emph{rolling back} the insertions,
however this kills any hope of good amortized bounds. As noted in
\cite{AbboudV14,HenzingerKNS15,KopelowitzPP16} it seems more difficult to
obtain good bounds in this case, and specialized reductions are often
needed.

\subsection{Bounds under weaker assumptions}
While proving higher lower bounds is the main goal of CLBs, a simultaneous goal
is to prove similar CLBs under weaker assumptions, thus lending more credibility
to the belief that a problem is difficult or even impossible. Several recent
papers concerns themselves with this be either replacing a conjecture with a
weaker version as done by Abboud et al. in~\cite{AbboudHVW16} or by showing
similar reductions under several conjectures
\cite{VassilevskaW09,AbboudL13,AbboudVW14,AbboudVY15,HenzingerKNS15}.
As an example Abboud, Vassilevska Williams, and Yu~\cite{AbboudVY15} showed that
3SUM, APSP and CNF-SAT can all be reduced to the same problem of finding
triangles in a node-colored graph and showed several interesting results based
on the following conjecture:
\begin{conjecture}[\cite{AbboudVY15}]\label{conj:tc}
    At least one of the following is true:
    \begin{enumerate}
        \item There is no algorithm for the 3-SUM problem running in
            $O(n^{2-\eps})$ for any $\eps > 0$.
        \item There is no algorithm for the APSP problem on weighted graphs
            running in $O(n^{3-\eps})$ for any $\eps>0$.
        \item For every $\delta>0$ there is an integer $k\ge 3$ such that
            $k$-SAT on $n$ variables and $O(n)$ clauses cannot be solved in
            $2^{(1-\delta)n}\poly(n)$ time.
    \end{enumerate}
\end{conjecture}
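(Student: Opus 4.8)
Conjecture~\ref{conj:tc} is a conjecture, not a theorem, so strictly there is nothing to prove; what follows is the only ``proof plan'' available, together with the reason I do not expect it to succeed. The plan is to establish at least one of the three disjuncts outright. Items (1), (2) and (3) are precisely the standard forms of the 3SUM conjecture, the APSP conjecture, and the Strong Exponential Time Hypothesis, and since the statement only asserts their \emph{disjunction}, proving any single one suffices. First I would attack item~(2): I would try to show that APSP on $n$-vertex weighted graphs needs $n^{3-o(1)}$ time, for instance via a fine-grained reduction from a problem with a known unconditional lower bound, or via a communication / information-theoretic argument over the entries of the distance matrix. Failing that, I would run the analogous strategy for 3SUM (item~(1)), or try to sharpen the known ETH-style arguments for sparse $k$-SAT (item~(3)).

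The main obstacle is that each of the three disjuncts is itself a famous open problem that has resisted decades of effort: we have no techniques capable of proving super-linear time lower bounds against general algorithms for any of 3SUM, APSP, or CNF-SAT, and there is no known fine-grained reduction forcing at least one of them to be hard, so the disjunction is not known to be any easier than its parts. Hence a proof is out of reach, and the statement must be taken as a hypothesis. What \emph{can} be shown around this point --- and what Abboud, Vassilevska Williams, and Yu~\cite{AbboudVY15} establish --- is the complementary robustness claim: one reduces 3SUM, APSP, and CNF-SAT all to the same problem of finding triangles in a node-colored graph, so that an unexpectedly fast algorithm for that single problem would refute Conjecture~\ref{conj:tc}, i.e.\ would break all three classical barriers at once. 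For the present paper the only use of the conjecture is the trivial logical one: any conditional lower bound proved under Conjecture~\ref{conj:tc} is automatically a lower bound under the 3SUM conjecture, under the APSP conjecture, and under SETH individually, which is exactly why this weaker, disjunctive form is the right thing to work with.
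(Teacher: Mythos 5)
You correctly recognize that this is a conjecture imported from~\cite{AbboudVY15}, not a theorem, and the paper indeed offers no proof --- it is taken as a hypothesis under which the conditional lower bounds are derived. Your explanation of why the disjunction is used (a CLB under the disjunction holds under each of the 3SUM, APSP, and SETH conjectures individually) matches the paper's motivation exactly, so there is nothing to compare or fault.
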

The third item in Conjecture~\ref{conj:tc} is what is known as the strong
exponential time hypothesis (SETH) \cite{ImpagliazzoP01} and the $O(n)$ bound
on the number of clauses follows from the sparsification lemma of Impagliazzo,
Paturi, and Zane \cite{ImpagliazzoPZ01}.

\subsection{Our results}
In this paper we consider three of the perhaps most classic problems in graph
theory, namely maximum flow, maximum bipartite matching and diameter in the
partially dynamic setting.
For maximum flow and maximum bipartite matching we
show new, stronger, and more natural conditional lower bounds. For diameter we
show a new reduction from Conjecture~\ref{conj:tc} to both the partially
dynamic version of diameter and, perhaps more interestingly, the static case.
This is the first known connection from APSP and 3SUM to diameter in graphs and
addresses one of the main open problems in the area as stated in
\cite{AbboudGV15}.

\subparagraph*{Maximum bipartite matching}
In dynamic maximum cardinality bipartite matching we wish to maintain the size
of a maximum matching in a dynamic graph $G$. One can trivially do this in
$O(m)$ time by finding an augmenting path. Sankowski \cite{Sankowski07} gave a fully
dynamic algorithm with update time $O(n^{1.495})$ by using fast matrix
multiplication. In the incremental setting, one may consider a node-addition version in which the right-hand side of the
bipartite graph is given and the left-hand side arrives one node at a time with
all its incident edges. In this model Bosek et al.~\cite{BosekLSZ14} gave an
algorithm with total running time of $O(m\sqrt{n})$. From a hardness
perspective, Abboud and Vassilevska Williams \cite{AbboudV14} gave reductions
from 3SUM, triangle detection and boolean matrix multiplication to
fully-dynamic maximum cardinality bipartite matching. In particular, they
showed that a $O(n^{2-\eps})$ algorithm would imply a faster \emph{combinatorial} boolean
matrix multiplication algorithm. Their reductions, however, only imply
worst-case bounds in the case of partially dynamic algorithms. This was
addressed by Kopelowitz, Pettie and Porat~\cite{KopelowitzPP16} who revisited
P\v{a}tra\c{s}cu's reductions from \cite{Patrascu10} and showed that any
$O(n^{1/3-\eps})$ algorithm for incremental MCM would imply a truly
subquadratic algorithm for 3SUM. They also showed the same result for
$O(n^{0.39-\eps})$ algorithms when node insertions are allowed. Subsequently,
in an online version of \cite{HenzingerKNS15}, it was shown how to obtain a CLB
of $O(m^{1/2-\eps})$ in sparse graphs by reducing from the online matrix-vector
multiplication (OMv) problem.

In this paper we show the following theorem:
\begin{theorem}\label{thm:matching}
    There is no algorithm for solving incremental (or decremental) maximum
    cardinality
    bipartite matching with amortized time $O(n^{1-\eps})$ per insertion (or
    deletion) and $O(n^{2-\eps})$ time per query unless the OMv conjecture of
    \cite{HenzingerKNS15} is false.
    %Conjecture~\ref{conj:omv} is false.
\end{theorem}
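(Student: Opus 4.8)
The plan is to reduce from the OMv problem (or more precisely from its combinatorial equivalent, the OuMv problem, where we receive vectors $u$ and $v$ online and must output $u^T M v$), turning each online query into a batch of $O(n)$ edge insertions and a single matching-size query. The key structural idea is the standard gadget that encodes a boolean matrix $M \in \{0,1\}^{n\times n}$ as a bipartite graph: create a vertex $m_i$ for each row and $\bar m_j$ for each column, with an edge $(m_i,\bar m_j)$ whenever $M_{ij}=1$. This bipartite graph is fixed once and for all at preprocessing time (it has $O(n^2)$ edges, matching the $O(n^2)$ preprocessing budget OMv allows), and to handle each query we attach small "selector" gadgets on both sides that switch on exactly the rows indicated by $u$ and the columns indicated by $v$.

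First I would set up the base graph: besides the $M$-gadget, add $n$ pendant vertices on the row side — one $p_i$ paired with each $m_i$ — and similarly $n$ pendant vertices $q_j$ paired with each $\bar m_j$, so that in the absence of any query edges the graph already has a perfect matching of size $2n$ (each $m_i$ to its $p_i$, each $\bar m_j$ to its $q_j$). Then, to process the $t$-th online pair $(u^{(t)},v^{(t)})$, I would insert, for every $i$ with $u^{(t)}_i = 1$, an edge from a fresh "left terminal" vertex $s$ to $m_i$, and for every $j$ with $v^{(t)}_j=1$ an edge from a fresh "right terminal" vertex $s'$ to $\bar m_j$ (the freshness is what keeps this incremental — we never delete, we just keep spawning new terminal vertices across the $n$ queries, which only costs $O(n)$ extra vertices total and is absorbed into the $n$-node bound up to constants). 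The point is that $u^T M v = 1$ iff there exist $i,j$ with $u_i = M_{ij} = v_j = 1$, which happens iff one can route an augmenting path $s \to m_i \to \bar m_j \to s'$ through the $M$-gadget — but using $m_i$ that way frees nobody (it steals $m_i$ from $p_i$, steals $\bar m_j$ from $q_j$), so the matching size goes from $2n$ up to $2n+1$ exactly when the answer is $1$. Querying the maximum matching size and comparing to $2n$ therefore reads off the OuMv answer.

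The remaining wrinkle is that we must not let stale query edges from round $t$ interfere with round $t+1$: since we are incremental we cannot remove them. I would handle this by giving each round its own disjoint copy of the selector structure but a \emph{shared} $M$-gadget is dangerous because a leftover augmenting path from an old round could keep the matching at $2n+1$ forever. The clean fix is to give each of the $n$ rounds its own private copy of the $m$-vertices and $\bar m$-vertices and the pendants $p_i,q_j$ — i.e. $n$ parallel copies of the $(m,\bar m,p,q)$ layer, with round $t$'s terminals $s_t,s'_t$ connecting only into copy $t$ — but keep a single shared representation of $M$'s edges by having copy $t$'s edge $(m_i^{(t)}, \bar m_j^{(t)})$ present iff $M_{ij}=1$. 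This blows up the vertex count to $\Theta(n^2)$ and the edge count to $\Theta(n^3)$, which is fine: an amortized $O(N^{1-\eps})$ update bound on $N=\Theta(n^2)$ vertices gives total time $O(n \cdot n \cdot (n^2)^{1-\eps}) = O(n^{4-2\eps})$ over all $n^2$ insertions and $n$ queries at $O(N^{2-\eps}) = O(n^{4-2\eps})$ each, which beats the conjectured $n^{3-o(1)}$ OuMv barrier. Finally, the same construction run in reverse — start with all query edges present and delete them round by round in decreasing order of $t$ — gives the decremental bound; I'd just start the graph with all $n$ rounds' selector edges in place and peel them off.

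The main obstacle I anticipate is \textbf{isolating the rounds while staying incremental}: naively sharing the $M$-gadget across rounds breaks monotonicity of the answer, so the whole argument hinges on finding the right balance between how much structure to replicate (enough that round $t$'s query is not polluted by rounds $< t$) and how much to share (little enough that the vertex count stays polynomially bounded so the reduction is non-trivial). The bookkeeping to verify that the maximum matching in the combined graph is exactly $2n \cdot (\text{\#rounds}) + (\text{number of rounds answering } 1)$ — i.e. that the per-round augmenting paths are vertex-disjoint and cannot be "shared" or chained across copies to cheat — is the part that needs a careful but routine argument via König's theorem or an explicit vertex cover. A secondary point to get right is the accounting that makes $O(n^{1-\eps})$ amortized per \emph{edge} insertion (with $\Theta(n^2)$ edges per round but only $\Theta(n)$ present per round if we are economical) translate correctly into a contradiction with OuMv; choosing the sparsest viable gadget keeps this clean.
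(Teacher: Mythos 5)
Your high-level plan — reduce from OuMv, encode $M$ as a bipartite gadget with pendants so the matching is perfect at baseline, and read off $u^T M v$ from whether an augmenting path $s\to m_i \to \bar m_j \to s'$ exists — is exactly the paper's starting point, and you correctly identify the central obstacle: stale selector edges from round $t$ must not pollute round $t+1$. But the fix you propose is where the argument falls apart. Replicating the entire $(m,\bar m, p, q)$ layer per round blows the graph up to $\Theta(n^2)$ vertices and $\Theta(n^3)$ edges. Since the dynamic matching algorithm is not allowed to preprocess (the paper is explicit about this — for incremental you start from the empty graph), all $\Theta(n^3)$ edges must be inserted, and at $O(N^{1-\eps}) = O(n^{2-2\eps})$ amortized each you get $O(n^{5-2\eps})$ total, which never contradicts the $n^{3-o(1)}$ OuMv barrier. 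Even under your implicit (and disallowed) assumption that the $n$ gadget copies come for free and only the $\Theta(n^2)$ selector edges are inserted, your own arithmetic gives $O(n^{4-2\eps})$, which contradicts OuMv only for $\eps > 1/2$, not for all $\eps > 0$ as the theorem claims. The accounting simply does not close.

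The paper's solution is to share a single $M$-gadget (layers $B$, $C$) across all rounds and instead give each round $i$ its own pair of indirection/terminal nodes on each side: $a^\ell_i, a^r_i$ (connecting $s$ to $B$) and $d^\ell_i, d^r_i$ (connecting $C$ to $t$), plus per-round terminals $s^\ell_i, s^r_i, t^\ell_i, t^r_i$. Round $i$'s selector edges $(a^r_i, b^\ell_j)$ land in the shared $B$-layer, and after querying, the round is ``sealed'' by inserting the capping edges $(s^\ell_i, s^r_i)$ and $(t^\ell_i, t^r_i)$: these force $s^r_i$ and $t^\ell_i$ into the baseline matching and raise the baseline size by $2$, so the augmenting path that answered round $i$ can never again contribute a $+1$ in later rounds. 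This keeps the graph at $O(n)$ vertices and $O(n^2)$ edges, giving $O(n^2)$ insertions at $O(n^{1-\eps})$ amortized plus $n$ queries at $O(n^{2-\eps})$, for a total of $O(n^{3-\eps})$ — a genuine contradiction with OuMv for every $\eps > 0$. In short: the idea you were missing is to make round isolation cheap by capping terminals rather than by duplicating the gadget; your instinct that ``naively sharing the $M$-gadget across rounds breaks monotonicity'' led you to the expensive fix, when the intended fix monotonically neutralizes old augmenting paths instead of avoiding shared structure.
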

One thing to note about Theorem~\ref{thm:matching} is that it separates the
node-addition model from the edge-insertion model as it implies a total running
time of $O(mn^{1-o(1)})$ in contrast to the $O(m\sqrt{n})$ running time of the algorithm from \cite{BosekLSZ14}. Furthermore, the reduction used to
prove Theorem~\ref{thm:matching} also rules out any efficient incremental (or
decremental) approximation algorithm that works by ruling out the existence of
short augmenting paths. Ruling out such paths is a popular way of ensuring a
good approximation ratio \cite{NeimanS13}.

\subparagraph*{Maximum flow}
Single-source single-sink maximum flow ($st$ Max-Flow) is one of the most
classic problems in graph theory.
In recent
years there have been several breakthrough results for $st$ maximum flow using
the powerful tools of Laplacian system solvers and interior point methods
\cite{Madry13,Sherman13,KelnerLOS14,LeeS14}. These algorithms seem to take
near-line time in practice, and the limits of our current analysis might be the
bottleneck in proving such upper bounds. Proving super-linear conditional
lower bounds for this problem may
thus be difficult if not impossible. Therefore, Abboud et al.~\cite{AbboudVY15}
considered different variants of the problem such as single-source maximum flow
and $ST$ maximum flow. They also showed that any algorithm solving the
fully-dynamic version of $st$ maximum flow with amortized update and query time
$O(n^{1-\eps})$ for any $\eps>0$ would refute Conjecture~\ref{conj:tc}.
Finally, we note that it is possible to modify the $m^{1-o(1)}$ CLB for fully
dynamic \#SSR of Abboud and Vassilevska Williams \cite{AbboudV14} to obtain a
$m^{1-o(1)}$ CLB for fully-dynamic $st$ max-flow in sparse graphs.

In this paper we show that even in the incremental and decremental case $st$
maximum flow exhibit the same kind of CLB, but based solely on SETH. This is
summarized in the following theorem:
\begin{theorem}\label{thm:maxflow2}
    There is no algorithm for solving incremental (or decremental) max $st$
    flow on a weighted and directed graph with $n$ nodes and $\Ot(n)$ edges with
    amortized time $O(m^{1-\eps})$ per operation for any $\eps > 0$ unless SETH
    is false.
\end{theorem}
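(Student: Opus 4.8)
The plan is to reduce from CNF-SAT. Given a $k$-CNF formula on $n$ variables — write $N:=2^{n/2}$ — I would first apply the sparsification lemma to reduce to $2^{o(n)}$ formulas each with $d=O(n)=O(\log N)$ clauses, so it suffices to decide one such formula $\varphi$ in time $\Ot(N^{2-\eps})=2^{(1-\eps/2)n}\poly(n)$ (this still beats $2^n$ across the subinstances and refutes SETH). For $\varphi$ I would produce a directed weighted graph with $\Theta(N)$ nodes and $\Ot(N)$ edges on which a sequence of $\Ot(N)$ edge insertions (resp.\ deletions) and $N$ max-flow queries reveals satisfiability; an incremental (resp.\ decremental) max $st$ flow algorithm with amortized cost $O(m^{1-\eps})=\Ot(N^{1-\eps})$ per operation (here $m=\Ot(N)$) runs this sequence in $\Ot(N^{2-\eps})$, as desired. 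To set up the instance I split the variables into halves $X,Y$ of size $n/2$, enumerate the partial assignments $\alpha_1,\dots,\alpha_N$ to $X$ and $\beta_1,\dots,\beta_N$ to $Y$, and let $S_i\subseteq[d]$ be the clauses $\alpha_i$ leaves unsatisfied (using its $X$-literals) and $F_j\subseteq[d]$ the clauses $\beta_j$ leaves unsatisfied; then $\varphi$ is satisfiable iff some pair has $S_i\cap F_j=\emptyset$ — an Orthogonal Vectors instance with $N$ vectors of dimension $d$ per side.

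Next I would build a single persistent gadget for the $\beta_j$'s and sweep over the $\alpha_i$ in $N$ cheap phases. The persistent part has a source $s$, a node $R_j$ for each $\beta_j$ with an edge $s\to R_j$ of capacity $1$, a clause node $\gamma_\ell$ for each $\ell\in[d]$, and an edge $R_j\to\gamma_\ell$ of capacity $N$ whenever $\ell\in F_j$ — in all $O(N)$ nodes and $O(Nd)=\Ot(N)$ edges. In phase $i$ I would insert a fresh sink $t_i$ together with the at most $d$ edges $\gamma_\ell\to t_i$ ($\ell\in S_i$, capacity $N$) and query the maximum $s$-$t_i$ flow. Because the only capacity-$1$ edges are the $s\to R_j$ and everything downstream has capacity $N$, each flow unit uses exactly one $R_j$, so this value equals $\#\{\,j:F_j\cap S_i\neq\emptyset\,\}$; it is $<N$ exactly when some $\beta_j$ has $F_j\cap S_i=\emptyset$, i.e.\ when $\alpha_i$ extends to a satisfying assignment, and hence $\varphi$ is satisfiable iff some phase returns a value below $N$. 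Working with max-flow rather than reachability is what makes this cheap: its min-cut side lets the clause nodes $\gamma_\ell$ be shared by all $R_j$ — flow may mix arbitrarily at a $\gamma_\ell$ without changing the count — so each phase needs only $\Ot(1)$ fresh edges instead of a fresh copy of the $\beta$-side, keeping the graph of size $\Ot(N)$.

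Crucially, this update sequence is monotone, so the reduction is genuinely partially dynamic. In the incremental case I would insert the persistent part once and then, phase after phase, only insert the $\Ot(1)$ edges incident to the new sink; the earlier sinks $t_1,\dots,t_{i-1}$ are dead ends with no path to $t_i$, so they cannot affect the phase-$i$ query, there is no cross-phase interference, and nothing is ever deleted. In the decremental case I would instead start from the graph that already contains the persistent part and all sinks $t_1,\dots,t_N$, query the $s$-$t_i$ flow in phase $i$, and then delete the edges of $t_i$; since those lie on no $s$-$t_{i'}$ path for $i'\neq i$, all later queries remain correct. Either way the reduction makes $\Ot(N)$ insertions (resp.\ deletions) and $N$ queries on an $\Ot(N)$-node, $\Ot(N)$-edge graph, giving the claimed total time.

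I expect the real obstacle to be the gadget design itself: producing something that is at once (i) sparse, (ii) touched by only $\Ot(1)$ edge updates per phase, and (iii) phase-by-phase non-interfering, so that the entire sequence is genuinely incremental (or genuinely decremental) rather than relying on the delete-then-reinsert ``reset'' used by fully-dynamic reductions such as that of Abboud, Vassilevska Williams and Yu. Requirements (i)--(ii) force all the $\beta_j$-dependent structure into a single shared component, while (iii) demands that each phase's query be isolated from the others, and the shared-$\gamma_\ell$ / fresh-$t_i$ construction is exactly what reconciles them. The remaining care is in verifying that the queried value is \emph{precisely} $\#\{\,j:F_j\cap S_i\neq\emptyset\,\}$: the capacity-$1$ bottleneck must sit on the $s\to R_j$ edges so no single path can carry extra units, and one must check that no spurious augmenting path sneaks in through the dead-end sinks of the other phases.
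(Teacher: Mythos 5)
Your reduction has a genuine gap: it is not a reduction to the incremental $st$ max-flow problem, because the sink changes from phase to phase. You insert a \emph{fresh} sink $t_i$ in phase $i$ and query the $s$-$t_i$ flow; this is a single-source / arbitrary-sink query model, which is a strictly more powerful data structure than one maintaining the max flow between a \emph{fixed} pair $(s,t)$. Hardness for that model does not transfer to the fixed-$(s,t)$ problem the theorem is about. And you cannot simply merge all the $t_i$'s into one fixed $t$: the $\gamma_\ell\to t$ edges inserted for $S_1,\dots,S_{i-1}$ would persist, so the phase-$i$ query would report $\#\{\,j : F_j\cap\bigcup_{i'\le i}S_{i'}\neq\emptyset\,\}$ rather than $\#\{\,j : F_j\cap S_i\neq\emptyset\,\}$, destroying the signal.

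The paper's construction resolves exactly this difficulty in a different way. It keeps $s$ and $t$ fixed, puts the clause layer between the $a$-side and the $b$-side with capacity $N$ on the $A\to C$ edges and capacity $1$ on $C\to B$ and $B\to t$, and sweeps over the $a_i$'s by inserting $(s,a_i)$ of capacity $N$. The crucial idea you are missing is the ``shortcut'' edge $(a_i,t)$ of capacity $N$, inserted \emph{after} the phase-$i$ query: it neutralizes $a_i$ so that in every later phase the stale $a_j$'s ($j<i$) contribute exactly $N$ each, making the flow land at exactly $i\cdot N$ unless the current $a_i$ misses some $b\in B$ (which happens iff $a_i$ extends to a satisfying assignment). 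In other words, the paper handles cross-phase contamination at the \emph{source} side by adding cancelling capacity, rather than trying to isolate phases at the \emph{sink} side — and that is what makes the construction a valid incremental (or reversed, decremental) instance of fixed-$(s,t)$ max flow with $\Ot(N)$ edges and $\Ot(N)$ monotone updates. Your gadget with shared clause nodes $\gamma_\ell$ and capacity-$1$ bottleneck on $s\to R_j$ is internally sound, but without an analogue of the shortcut trick it cannot be compressed into a single fixed $st$ pair.
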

Our bound shows that we cannot hope to get incremental maximum flow in offline
time as is the case for other problems.
We note that the above result only holds for directed
and weighted graphs. We show similar results for other types of graphs:
\begin{theorem}
    There is no algorithm for solving incremental (or decremental) max $st$
    flow on unweighted directed graphs or weighted undirected graphs on $n$
    nodes with amortized time $O(n^{1-\eps})$ per operation for any $\eps > 0$
    unless the OMv conjecture is false.
\end{theorem}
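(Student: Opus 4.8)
**Proof proposal for the final theorem (incremental/decremental $st$ max-flow on unweighted directed or weighted undirected graphs).**

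The plan is to reduce from the OMv problem, mirroring the structure used for Theorem~\ref{thm:matching} but routing the bottleneck through an $st$-cut rather than a matching. Recall the OMv setup: we are given an $n \times n$ boolean matrix $M$ in a preprocessing phase, and then must answer $n$ online queries $v_1, \dots, v_n$, each asking for the boolean product $M v_i$; the conjecture is that no algorithm does this in total time $O(n^{3-\eps})$. As is standard, we use the ``$u^\top M v$'' reformulation: it suffices to decide, for each pair $(u_i, v_i)$ of boolean vectors arriving online, whether $u_i^\top M v_i = 1$, in total time $O(n^{3-\eps})$ with an $O(n^{2-\eps})$-per-query budget.

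First I would build a fixed base gadget encoding $M$. Create a source $s$, a sink $t$, a layer of ``row'' nodes $r_1,\dots,r_n$ and a layer of ``column'' nodes $c_1,\dots,c_n$, with an edge $r_a \to c_b$ (capacity $1$) whenever $M_{ab}=1$. This graph is static and has $O(n^2)$ edges, which is the reason the bound we get is $O(n^{1-\eps})$ per operation rather than something sparser — the $O(n)$-operation phase below is what must dominate $n^{2}$ edges. For each online query $(u_i,v_i)$ I would, using only \emph{insertions}, add the edge $s \to r_a$ (capacity $1$) for every $a$ with $(u_i)_a = 1$, and the edge $c_b \to t$ (capacity $1$) for every $b$ with $(v_i)_b = 1$; then a single max-flow query returns a value $\ge 1$ iff there is an augmenting $s$-$t$ path, i.e.\ iff some $r_a \to c_b$ edge has both endpoints activated, i.e.\ iff $u_i^\top M v_i = 1$. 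After the query I \emph{roll back} just these $O(n)$ inserted edges to return to the base graph. Each query thus costs $O(n)$ operations, for a total of $O(n^2)$ operations over the whole OMv instance; if each had amortized cost $O(n^{1-\eps})$ and the query cost $O(n^{2-\eps})$, the whole OMv instance would be solved in $O(n^{3-\eps})$ time, contradicting the conjecture. To turn this into a genuine incremental (no deletions) lower bound one uses the standard trick of making $n$ fresh disjoint copies of the activation gadget — one private $(s_i, t_i, \text{row/column-entry})$ harness per query, all sharing the single static copy of $M$ only conceptually — or, more cheaply, observing the reduction is symmetric under reversal so the decremental version follows by starting with all activation edges present and deleting them; I would present the incremental direction via disjoint harnesses, which keeps the graph monotone. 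For the \emph{weighted undirected} variant, replace the directed edges $r_a\to c_b$, $s\to r_a$, $c_b\to t$ by undirected edges of capacity $1$ and give the ``backbone'' edges large capacity so that the only bottleneck across the $\{s\}$-side cut is still the single matched triple; the $st$-cut value distinguishes $0$ from $\ge 1$ exactly as before. For \emph{unweighted directed} we already only used unit capacities, so nothing changes there except the rollback/harness bookkeeping.

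The main obstacle I expect is the same one that distinguishes partially dynamic CLBs from worst-case ones, flagged in the excerpt's ``Difficulties of partially dynamic'' subsection: naively this reduction gives only a worst-case bound because step 3 (rolling back the $O(n)$ activation edges) reintroduces deletions. The real work is in arguing that one can avoid rollback entirely for the incremental statement — by allocating, up front, $n$ disjoint activation harnesses (one per future query) attached to $n$ copies of the $M$-gadget, so that the $i$-th query only ever \emph{inserts} edges into the previously-untouched $i$-th harness. This blows the graph up to $O(n^3)$ edges and $O(n^2)$ nodes, which would ruin an $O(n^{1-\eps})$-per-operation claim stated in $n$ = (number of nodes); so one must instead either reuse a single $M$-gadget across all harnesses (keeping $n^2$ edges, $O(n)$ nodes, but then the harnesses are not edge-disjoint and one has to verify that leftover activation edges from earlier queries do not create spurious augmenting paths — they don't, because an old activation edge $s\to r_a$ together with an old $c_b\to t$ can only certify an \emph{old} query's value, which we already recorded, so we can instead deactivate by \emph{re-querying with the complement} trick), or accept the decremental-to-incremental symmetry argument. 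I would settle this by the cleanest route: give the decremental reduction directly (start with every activation edge present, representing the all-ones query; delete edges to specialize to $v_i$; this is monotone decreasing and needs no rollback), obtain the decremental $O(n^{1-\eps})$ bound, and then invoke the standard observation that reversing time and edge-orientations converts a decremental $st$-max-flow algorithm into an incremental one with the same amortized guarantees, which transfers the bound. Verifying that this time-reversal is valid for the weighted undirected case (where there is no orientation to reverse, only capacities to track) is the last detail to check, and I expect it to go through because undirected max-flow is symmetric in $s$ and $t$ and the decremental sequence read backwards is exactly an incremental sequence on the same graph.
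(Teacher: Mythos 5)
Your proposal takes a genuinely different route from the paper, but it has a gap that the proposed fixes do not close. The paper proves this theorem in two lines: Theorem~\ref{thm:matching} already gives an $O(n^{1-\eps})$ amortized lower bound for incremental maximum bipartite matching, and the textbook reduction from bipartite matching to directed unit-capacity $st$-flow (add $s$, $t$, edges $s\to\ell$ and $r\to t$ of capacity $1$ up front, orient each matching edge $\ell\to r$) is \emph{online-compatible}: each edge insertion in the matching graph becomes exactly one edge insertion in the flow network, the node count is preserved up to a constant, and the flow value equals the matching size at every step. The undirected weighted case follows by a second textbook reduction from directed to undirected flow. No new gadget is needed.

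The gap in your direct OMv reduction is the absence of a ``sealing'' mechanism, and I don't think your acknowledgments resolve it. After several queries, the flow network contains \emph{all} activation edges $s\to r_a$ and $c_b\to t$ inserted so far, so the max flow equals a maximum matching between the union of all rows ever activated and the union of all columns ever activated. This value can increase for reasons having nothing to do with the current query: a row activated only in query $j$ together with a column activated only in query $j'\ne j$ with $M_{ab}=1$ raises the flow, but certifies nothing about any single $u_i^\top M v_i$. Your parenthetical ``old activations only certify old queries, re-query with the complement'' does not fix this cross-query contamination, and in fact even an increase in the flow value upon query $i$ need not mean query $i$ was a hit. The disjoint-harness fix you consider would repair correctness but, as you note, inflates the node count to $\Theta(n^2)$ so the $O(n^{1-\eps})$ bound (in nodes) evaporates. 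The decremental variant you sketch (start all-ones, delete down to $v_i$) is not monotone across queries either: passing from $v_i$ to $v_{i+1}$ would require re-inserting edges. And the final time-reversal step converts an incremental reduction to a decremental one (or vice versa), but cannot repair a reduction whose invariant was wrong to begin with. What the working constructions in the paper all share --- the ``$(s^\ell_i,s^r_i)$, $(t^\ell_i,t^r_i)$'' edges in the matching gadget and the ``shortcut'' edges $(a_i,t)$ in the SETH-based flow gadget --- is a per-phase sealing step that absorbs the contribution of phase $i$ so that phase $i+1$ starts from a predictable baseline; your gadget has no analogue of this. The cleanest repair for your approach would be to bolt on exactly such a sealing structure, at which point you essentially reconstruct the matching gadget of Theorem~\ref{thm:matching}; the paper sidesteps all of this by using that theorem as a black box.
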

This result follows directly from Theorem~\ref{thm:matching} by using textbook
reductions from maximum bipartite matching to directed flow (see
e.g.~\cite{Cormen09book}) and from directed flow to undirected flow (see
e.g.~\cite{MadryThesis}).

\subparagraph*{Diameter}
The diameter problem asks us to compute the longest shortest-path distance in a
graph $G$. Efficiently computing or approximating the diameter is a basic
problem in graphs
\cite{AbboudGV15,AingworthCIM99,ChechikLRSTW14,CyganGS15,RodittyW13}. One can
trivially compute the diameter in the same time as computing APSP, however in
general no
better algorithm is known. It remains a major open problem whether a reduction
exists in the other direction \cite{AbboudGV15} - that is, can we compute all
distances in the same time as the longest? One can, however, approximate the
diameter faster. Roditty and Vassilevska Williams \cite{RodittyW13} showed how
to compute a $3/2$-approximation in time $\Ot(m\sqrt{n})$ randomized, and
Chechik et al.~\cite{ChechikLRSTW14} showed how to obtain the same guarantee
deterministically in time $\Ot(\min(m^{3/2},mn^{2/3}))$. More recently, it was
shown by Cairo, Grossi and Rizzi~\cite{CairoGR16} how to obtain a $(2 -
\frac{1}{2^k})$-approximation in time $\Ot(mn^{\frac{1}{k+1}})$. From a
hardness perspective it is known that any algorithm able to distinguish between
diameter $3$ and $2$ in time $O(m^{2-\eps})$ for sparse graphs would refute
SETH \cite{RodittyW13}. Chechik et al.~\cite{ChechikLRSTW14} showed that
approximating within a $4/3-\eps$ factor with additive error $\beta =
O(m^\delta)$ in time $O(m^{2-2\delta-\eps'})$ for sparse graphs would also
refute SETH, and this bound was improved in \cite{CairoGR16} to rule out any
$3/2-\eps$ approximation with the same additive error and time bounds based on
SETH (also for sparse graphs). From the perspective of dynamic algorithms
Abboud and Vassilevska Williams \cite{AbboudV14} showed that any algorithm for
$4/3-\eps$-approximating the diameter in a fully dynamic graph with amortized
update time $O(m^{2-\eps'})$ would refute SETH. We also note, that the above
static reductions rules out any $O(m^{1-\eps'})$ amortized update time for
incremental algorithms.

We note that all the reductions mentioned above are based on SETH. Similar to
the work of \cite{AbboudVY15,AbboudHVW16} we seek to replace this assumption by
a weaker one. In this paper we show the first reduction from 3SUM and APSP to
the diameter problem. That is, we show that a fast algorithm for approximating
the diameter implies a faster algorithm for the APSP and 3SUM
problems. The bounds we achieve are not as strong as the known bounds based on
SETH \cite{RodittyW13,ChechikLRSTW14,CairoGR16}, however they are based on a
weaker conjecture and hold even if SETH turns out to be false, thus giving more
credibility to the difficulty of the problem. For the partially dynamic case we
show the following theorem:
\begin{theorem}\label{thm:dyn_diam}
    There exists no incremental (or decremental) algorithm that approximates
    the diameter of an unweighted graph within a factor of $4/3-\eps$ running
    in amortized time $O(n^{1/2-\eps'})$ for any $\eps,\eps' > 0$ unless
    Conjecture~\ref{conj:tc} is false. Furthermore, if we allow node insertions
    in the incremental case the bound is $O(n^{0.618-\eps'})$.
\end{theorem}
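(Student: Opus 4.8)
The plan is to reduce from the colored triangle detection problem underlying Conjecture~\ref{conj:tc}. Recall that by \cite{AbboudVY15}, if 3SUM, APSP, and SETH are all false, then one can detect (and in fact enumerate) triangles in a graph whose vertices are colored with three color classes $A$, $B$, $C$ faster than the conjectured bounds — concretely, listing all such triangles, or deciding existence of a triangle with one vertex in each of $A,B,C$, in time $O((|A||B||C|)^{1-\eps})$ would refute the conjecture. The key idea, mirroring the classic SETH-based diameter gadgets of \cite{RodittyW13,AbboudV14}, is to build an unweighted graph whose diameter is small (say $2$) if and only if a certain family of "bad" pairs is fully covered, and is large (say $3$ or more) otherwise, and to let the incremental edge insertions correspond to scanning over one of the three color classes.

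Concretely, I would proceed as follows. First, fix the colored tripartite graph $H$ on color classes $A$, $B$, $C$ with $|A|=|B|=|C|=n$ (after standard balancing). Build a static base graph $G_0$ on vertex set roughly $A \cup B \cup$ (a set of "coordinate" or "witness" gadget vertices) together with two apex-style vertices, wired so that every pair of vertices in $A\cup B$ is already at distance $2$ \emph{except} those pairs $(a,b)$ with $ab \in E(H)$, which are placed at distance $3$ unless "resolved" by some $c\in C$ with $ac, bc \in E(H)$ — i.e.\ unless $a,b,c$ form a rainbow triangle. Then process the vertices $c_1,\dots,c_n \in C$ one at a time: inserting $c_i$ (with its incident edges to its $H$-neighbours in $A$ and $B$, plus a fixed connector edge) shortcuts exactly those pairs $(a,b)$ for which $c_i$ witnesses a triangle. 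After all $n$ insertions have been performed, query the (approximate) diameter: it is $\le 2$ iff every edge $ab\in E(H)$ was resolved, and $\ge 3$ otherwise; a $(4/3-\eps)$-approximation distinguishes these two cases since $3/2 > 4/3$. An incremental algorithm with amortized update time $O(n^{1/2-\eps'})$ and the stated query time then solves colored triangle detection in $O(n \cdot n^{1/2-\eps'}) + O(\text{query})$ total time on a graph with $\Theta(n)$ vertices, i.e.\ in time $O(n^{3/2-\eps'})$ — subcubic in the natural parameter $N = |A|+|B|+|C| = \Theta(n)$ in the sense needed to contradict Conjecture~\ref{conj:tc} after the standard reparametrization (splitting each class into $n^{1/2}$ blocks so that the triangle instance has size $N \approx n^{1/2}$ per class and the reduction runs in $O(N^{3-\eps''})$). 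The decremental version is symmetric: start with all of $C$ present and delete $c_1,\dots,c_n$, querying at the end (or querying before each deletion, reversing the roles). For the node-insertion variant, inserting each $c_i$ as a genuine new node with its incident edges is even more direct, and the stronger $O(n^{0.618-\eps'})$ bound comes from using the rectangular/unbalanced splitting of the color classes (the exponent $0.618 = 1/\phi$ reflecting the golden-ratio tradeoff that appears in the Kopelowitz--Pettie--Porat style analysis when one balances the number of update operations against the per-operation cost in a two-parameter 3SUM reduction).

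The main obstacle I anticipate is designing the base graph $G_0$ so that (a) \emph{all} non-triangle-edge pairs in $A\cup B$ are automatically at distance exactly $2$ regardless of which $C$-vertices have been inserted — so that the diameter is governed solely by the unresolved triangle-edges — while (b) the graph stays sparse ($\Ot(n)$ edges total including all insertions, so the reduction overhead matches the claimed time bounds) and (c) inserting $c_i$ never \emph{increases} any distance and never accidentally creates a shortcut for a pair that should have remained at distance $3$. Point (a) typically forces the introduction of one or two universal/apex vertices plus a careful bipartite-complement structure, and ensuring these apexes do not themselves shorten the "bad" pairs to distance $2$ requires the standard trick of making the bad pairs $(a,b)$ far from the apex as well, or encoding the edge set of $H$ via an auxiliary "coordinate gadget" layer as in \cite{RodittyW13}. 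Getting the sparsity right simultaneously with the triangle-covering semantics — and then verifying that the amortized-update contradiction survives the reparametrization needed to match Conjecture~\ref{conj:tc}'s exponents — is where the real work lies; the rest is bookkeeping.
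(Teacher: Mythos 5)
Your proposal reduces from the wrong hardness problem, and this is a genuine gap rather than a detail. You propose to decide whether there exists a rainbow triangle (or, in the "all edges covered" variant you actually encode, whether every edge $ab\in E(H)$ has a witness $c\in C$). Neither of these is the Triangle Collection problem that \cite{AbboudVY15} proves hard under Conjecture~\ref{conj:tc}. Detecting a single rainbow triangle in a tripartite graph is solvable in $O(n^\omega)$ time by matrix multiplication and is emphatically not conjectured to require cubic time; and "is every edge in a rainbow triangle" is also not the problem with the established lower bound. What Conjecture~\ref{conj:tc} lower-bounds (via TC*) is whether \emph{every triple of colors} (out of $n$ colors per side, with polylogarithmically many vertices per color) is represented by some triangle. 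The paper's construction $H_{\gamma,k}(G)$ is designed precisely so that a length-$3$ path from $a^i_\alpha$ to $t^i_\beta$ corresponds to a triangle of the color triple $(i,\alpha,\beta)$, so the diameter being $4$ means some color triple is missing. Your graph encodes an edge-coverage condition, not a color-triple-coverage condition, and there is no clean way to recover TC* semantics from it; as a symptom, your parameter count would, if it worked, give an $n^{1-o(1)}$ (not $n^{1/2-o(1)}$) per-op bound on an $n$-vertex dynamic graph, stronger than the theorem claims — a red flag rather than a bonus.

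Beyond the choice of problem, the reduction itself is not constructed: you list desiderata for $G_0$ ("all non-triangle pairs at distance $2$ automatically," "apexes don't shortcut the bad pairs," "stays sparse") and acknowledge you don't know how to satisfy them simultaneously. The paper does not have a sparse ($\Ot(n)$-edge) base graph; $H_{1,0}(G)$ has $\Ot(n^2)$ nodes and $\Ot(n^2)$ edges, and the $n^{1/2-\eps'}$ bound arises because the number of nodes $N$ in the dynamic graph is $\Ot(n^2)$: the total update time is $\Ot(n^2)\cdot O(N^{1/2-\eps'}) = O(n^{3-2\eps'})$, contradicting the TC* bound of $n^{3-o(1)}$. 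Your "$\Theta(n)$ nodes, $\Ot(n)$ edges" aspiration plus a $\sqrt{n}$-block reparametrization doesn't produce this accounting. Finally, for the $0.618$ bound you correctly guess that the Kopelowitz--Pettie--Porat rollback idea is involved, but the exponent comes from the specific credit-counting argument (bounding the cost of non-rolled-back insertions by the accumulated amortized budget, then solving $\frac{2+\alpha}{1+\alpha}\alpha = 1$), not from a rectangular/unbalanced splitting of the color classes.
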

In order to achieve the result for node insertions, we use the technique of
Kopelowitz et al.~\cite{KopelowitzPP16} leveraging rollback with our standard
incremental bound. By doing this we obtain a graph with fewer nodes
and thus a better bound.
More interestingly, we are able to generalize our results from the incremental
case to the following result for static graphs:
\begin{theorem}\label{thm:sta_diam}
    There exists no static $4/3-\eps$ approximation to the diameter on
    unweighted graphs running in $O((n\sqrt{m})^{1-\eps'})$ time for
    any $\eps,\eps' > 0$ and any number of edges $m$ unless
    Conjecture~\ref{conj:tc} is false.
\end{theorem}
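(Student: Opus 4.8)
The plan is to reuse the reduction behind Theorem~\ref{thm:dyn_diam}, to observe that it is essentially a \emph{static} reduction, and then to add a vertex‑compression step that trades vertices for edges; the compression is the only genuinely new ingredient.

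Recall that the reduction underlying Theorem~\ref{thm:dyn_diam} starts from the colored‑triangle problem of~\cite{AbboudVY15} that underlies Conjecture~\ref{conj:tc}: on an $N$‑vertex colored graph this problem has no $O(N^{3-\delta})$ algorithm unless Conjecture~\ref{conj:tc} fails (this is the APSP branch, and~\cite{AbboudVY15} reduces the 3SUM and SETH branches to the same problem). From such an instance one builds, via $\Theta(N^2)$ edge insertions, an unweighted graph $H$ on $\tilde\Theta(N^2)$ vertices and $\tilde\Theta(N^2)$ edges with $\operatorname{diam}(H)\le 3$ if the colored‑triangle instance is a YES‑instance and $\operatorname{diam}(H)\ge 4$ otherwise; since $4>(4/3-\eps)\cdot 3$, any $(4/3-\eps)$‑approximation distinguishes the two cases. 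The first step is to make this reduction static. In the incremental version the edges describing successive slices of the colored‑triangle instance are inserted slice by slice with a diameter query in between, because later insertions would otherwise create spurious short paths through earlier slices; for a static reduction one instead gives each slice a private copy of precisely the part of $H$ that later insertions would corrupt (a $\polylog(N)$ blow‑up) and reads all slice‑answers off a single static diameter computation. This already yields Theorem~\ref{thm:sta_diam} in the regime $m=\tilde\Theta(n)$, where $n\sqrt m=\tilde\Theta(n^{3/2})=\tilde\Theta(N^3)$, so a static algorithm running in $O((n\sqrt m)^{1-\eps'})=O(N^{3-3\eps'})$ would solve the colored‑triangle problem in truly subcubic time.

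For an arbitrary edge count $m$ I would compress the $\tilde\Theta(N^2)$ vertices of $H$. These vertices arise as pairs (a color, a vertex of the colored‑triangle instance) in the two outer layers of $H$, so I would partition the $N$ vertices of that instance into blocks of size $s$ and replace the gadget for each pair $(x,u)$ by a single gadget for $(x,B)$ that aggregates the $\polylog(N)$ incidences of the $s$ members of the block $B$. The resulting static graph $H_s$ has $n=\tilde\Theta(N^2/s)$ vertices, $m=\tilde\Theta(N^2)$ edges, and the same $3$‑versus‑$4$ diameter gap, so $n\sqrt m=\tilde\Theta(N^3/s)$. Taking $s=N^{\delta}$ for a suitable $\delta=\delta(\eps')>0$, a static $(4/3-\eps)$‑approximation running in $O((n\sqrt m)^{1-\eps'})$, together with the $\tilde O(N^2)$‑time construction of $H_s$, would solve the colored‑triangle problem in $O(N^{3-\delta'})$ time for some $\delta'>0$, contradicting Conjecture~\ref{conj:tc}. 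Letting $s$ range over $[1,N]$ covers every edge count $m$ from (roughly) $n$ up to $n^2$, which is the full range of interest.

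The main obstacle is to show that block‑compression preserves the $3$‑versus‑$4$ gap. The delicate point is that a length‑$4$ path through the gadget for $(x,B)$ must still certify an honest rainbow triangle using a \emph{single} vertex $u\in B$ (and a single vertex of the block on the opposite side), rather than a spurious ``mix‑and‑match'' triangle combining incidences of different members of $B$ --- otherwise a NO‑instance of the colored‑triangle problem could be mapped to a graph of diameter $3$. I expect this to require keeping just enough per‑vertex structure inside each block gadget to force a shortest path to commit to one $u\in B$ (for instance via a private pendant vertex per $u$), while still keeping the gadget within distance $3$ of the rest of $H_s$ and of the connector used in the flattening step. Re‑deriving the distance bounds of the incremental construction in the compressed graph, and in particular re‑checking that every pair of vertices unrelated to the colored‑triangle question stays within distance $3$, is the bulk of the remaining work.
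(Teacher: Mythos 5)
Your proposal diverges from the paper's proof at the crucial step, and the divergence is where the gap lies. The paper handles the density parameter not by compressing vertices inside a single graph but by \emph{partitioning}: for a parameter $\gamma\in(0,1]$ it builds $n^{1-\gamma}$ separate instances $H_{\gamma,0}(G),\dots,H_{\gamma,n^{1-\gamma}-1}(G)$, where $H_{\gamma,k}(G)$ contains only the $A$/$T$ gadgets for the $n^\gamma$ colors $\{kn^\gamma,\dots,(k+1)n^\gamma-1\}$ of the $A$-side of the TC* instance, while keeping all of $B$, $C$ and their edges. Each such graph has $N=\tilde O(n^{1+\gamma})$ vertices and $M=\tilde O(n^2)$ edges, a single diameter computation certifies (via Lemma~\ref{lem:diam_graph}) whether any triplet with an $A$-color in that range is triangle-free, and multiplying the per-instance time $O\bigl((N\sqrt M)^{1-\eps'}\bigr)=O(n^{(2+\gamma)(1-\eps')})$ by $n^{1-\gamma}$ instances gives a truly subcubic bound for TC*. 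Varying $\gamma$ sweeps out the whole range of $m$ as a function of $n$. (Also, a small factual point: the paper's incremental reduction builds one graph $H_{1,0}(G)$ with a single diameter query at the end; there is no ``slice by slice with a query in between'' that your flattening step would need to undo.)

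Your alternative --- merging the color-indexed gadget nodes $a^i_\alpha$ (resp.\ $t^i_\beta$) over blocks of $\alpha$ into a single representative --- breaks the one property that the reduction lives on. After merging $\{a^i_\alpha:\alpha\in \text{block}\}$ into $\hat a$, you get $d(\hat a, t^i_\beta)=\min_{\alpha\in\text{block}} d(a^i_\alpha, t^i_\beta)$ up to constants, so a single triangle on colors $(i,\alpha',\beta)$ for one $\alpha'$ in the block hides the absence of a triangle on $(i,\alpha,\beta)$ for every other $\alpha$ in the block. Since TC* asks whether \emph{some} triplet is triangle-free, the diameter of the compressed graph can be $3$ even on a YES instance. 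Your suggested repair, a private pendant per $u$, does not fix this: a pendant $p_\alpha$ hanging off $\hat a$ has $d(p_\alpha, t^i_\beta)=1+d(\hat a,t^i_\beta)$, which is $\alpha$-independent; to make it $\alpha$-dependent the pendant would have to carry exactly the edges of the original $a^i_\alpha$, i.e., you are back to the uncompressed vertex. So the ``bulk of the remaining work'' you flag is not a technicality but a genuine obstruction, and in its present form the compression step does not go through. The partition-into-many-graphs trick is precisely what lets the paper tune the vertex/edge ratio without ever merging gadgets that encode distinct color triplets.
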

As mentioned, this is the first known reduction from APSP to diameter and shows
at least some weak connection in this direction. An interesting property of
Theorem~\ref{thm:sta_diam} is that it holds \emph{for any} $m$ as a function of
$n$ and thus an algorithm need not exist \emph{for all} $m$.
As a corollary of Theorem~\ref{thm:sta_diam} we see that no algorithm can
$(4/3-\eps)$-approximate the diameter of static unweighted graph in time
$O(n^{2-\eps'})$ for any $\eps,\eps'$ unless Conjecture~\ref{conj:tc} is false.
This is reminiscent of the bounds from
\cite{RodittyW13,ChechikLRSTW14,CairoGR16}, however not quite as strong as it
does not hold for sparse graphs, for which we get a bound of $O(m^{3/2-\eps'})$.

Similar to \cite{ChechikLRSTW14,CairoGR16} we also extend the above bound to
the case of $(4/3-\eps)$-approximations with additive error $O(m^\delta)$. We
show the following
\begin{corollary}\label{cor:diam_add}
    There exists no static $4/3-\eps$ approximation with additive error
    $O(m^\delta)$ with running time $O(m^{\frac{3}{2}(1-\delta) - \eps'})$ or
    incremental/decremental algorithm with amortized time $O(m^{\frac{1}{2} -
    \frac{3\delta}{2} - \eps'})$ for any $\eps,\eps' > 0$ unless
    Conjecture~\ref{conj:tc} is false.
\end{corollary}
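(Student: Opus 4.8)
The plan is to derive Corollary~\ref{cor:diam_add} from Theorem~\ref{thm:sta_diam} (and its incremental counterpart Theorem~\ref{thm:dyn_diam}) by a standard padding argument that trades the multiplicative-only hard instance for one in which a prescribed additive slack $\beta = O(m^\delta)$ becomes harmless. Concretely, start from the graph $G$ produced by the reduction behind Theorem~\ref{thm:sta_diam}: it has $N$ nodes, $M$ edges, the ``no'' instances have diameter $2$ and the ``yes'' instances have diameter $3$ (so that distinguishing the two is exactly a $4/3-\eps$ approximation). The idea is to blow up distances by a factor $t = \Theta(m^{\delta}/ \eps'')$ for a suitable small constant, so that a ``no'' instance has diameter $2t$ and a ``yes'' instance has diameter $3t$; now an additive error of $O(m^\delta) = o(t)$ together with a $(4/3-\eps)$ multiplicative factor still cannot bridge the gap between $2t$ and $3t$, so an approximate algorithm on the padded graph still decides the original problem.

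The first concrete step is to choose the blow-up construction. The cleanest way to multiply all pairwise distances by $t$ while keeping the graph unweighted is to subdivide every edge into a path of length $t$; this turns $M$ edges into $\Theta(Mt)$ edges and adds $\Theta(Mt)$ new nodes, so the new edge count is $m' = \Theta(Mt)$. One must double-check that subdivision multiplies \emph{every} shortest-path distance between original vertices by exactly $t$ and that distances involving subdivision vertices are sandwiched appropriately, so the new diameter is $t\cdot\mathrm{diam}(G)$ up to an additive $O(t)$ that does not affect the $2t$ versus $3t$ distinction after rescaling the gap; this is the routine but load-bearing calculation. Next, set $t$ so that $m^\delta$ measured in the \emph{new} graph, i.e. $(m')^\delta = \Theta((Mt)^\delta)$, is still $o(t)$; since $\delta < 1$ this is satisfied by taking $t = M^{\delta/(1-\delta)+o(1)}$, which makes $m' = M^{1/(1-\delta)+o(1)}$, equivalently $M = (m')^{1-\delta+o(1)}$. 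Then a $(4/3-\eps)$-approximation with additive error $O((m')^\delta)$ on the padded graph, running in time $T(m')$, decides the original SETH/Conjecture~\ref{conj:tc}-hard instance; plugging $T(m') = O((m')^{3/2-\eps'})$ and $m' = M^{1/(1-\delta)}$ gives running time $O(M^{(3/2-\eps')/(1-\delta)})$ on the original problem, and matching this against the $\Omega(M^{3/2-o(1)})$ lower bound from Theorem~\ref{thm:sta_diam} yields exactly the claimed bound $O(m^{\frac32(1-\delta)-\eps'})$ after renaming $m' \mapsto m$. The incremental bound $O(m^{\frac12 - \frac{3\delta}{2} - \eps'})$ follows by running the same padding on the incremental reduction of Theorem~\ref{thm:dyn_diam}: there the lower bound is $\Omega(m^{1/2-o(1)})$ amortized (equivalently $\Omega(n^{1/2-o(1)})$ since the hard graph is near-sparse), and subdivision increases both the number of insertions and the sparsity by the factor $t$, so the same exponent arithmetic with $1/2$ in place of $3/2$ gives the stated exponent; one should note that subdividing an edge is itself realizable by $O(t)$ incremental edge insertions, so the partially dynamic model is preserved.

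The main obstacle I anticipate is \emph{not} the exponent bookkeeping but making sure the additive term is measured consistently. The corollary quantifies the additive error as $O(m^\delta)$ where $m$ is the number of edges of the instance actually fed to the algorithm (the padded graph), so one must be careful that after padding the slack we are allowed is $(m')^\delta$, not $M^\delta$; getting this wrong would change the relationship between $t$ and $\delta$. A secondary subtlety is that subdivision changes $n$ as well as $m$, so for the corollary's running-time statement to be stated purely in terms of $m$ (as it is) one should verify the padded graph stays sparse enough, i.e. $n' = \Theta(m')$, which holds because the Theorem~\ref{thm:sta_diam} instance can be taken with $M = \Theta(N\sqrt M)$-type density or, more simply, because after heavy subdivision $n'$ and $m'$ are both $\Theta(Mt)$. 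Finally, one should remark — as the excerpt already does for Theorem~\ref{thm:sta_diam} — that the reduction works for any target $m$ as a function of $n$, so the corollary inherits the ``need not hold for all $m$'' caveat; I would state this explicitly at the end of the proof so the reader does not expect a stronger universally-quantified conclusion.
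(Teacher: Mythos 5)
Your approach is essentially the same as the paper's: start from $H_{1,0}(G)$ for a TC* instance sized so that its edge count is $m^{1-\delta}$, subdivide each edge by a factor of (roughly) $m^\delta$, and observe that the additive $O(m^\delta)$ slack is now absorbed by the multiplicative $\eps$-gap between the stretched diameters; the exponent bookkeeping then gives the claimed $O(m^{\frac{3}{2}(1-\delta)-\eps'})$ running time for the static case. Two small inaccuracies worth fixing: the hard instance distinguishes diameter $3$ from $4$ (not $2$ from $3$), though this is immaterial to the argument since $4/3-\eps < 4/3$; more substantively, your shortcut for the incremental exponent — ``the same arithmetic with $1/2$ in place of $3/2$'' — does not yield the stated bound, since $\frac{1}{2}(1-\delta) = \frac{1}{2}-\frac{\delta}{2}$, not $\frac{1}{2}-\frac{3\delta}{2}$. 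The correct derivation is that the amortized exponent is the static exponent minus one: the $\tilde{O}(m)$ insertions at amortized cost $m^\beta$ give total time $m^{1+\beta}$, which must be $\Omega(M^{3/2-o(1)})$ with $M = m^{1-\delta}$, forcing $1+\beta \ge \frac{3}{2}(1-\delta)$, i.e. $\beta \ge \frac{1}{2} - \frac{3\delta}{2}$.
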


\subsection{A note on the decremental results and preprocessing}
We will in general only describe the reductions in the incremental case and
note that the decremental results are obtained by removing the edges in the
reverse order of insertions. This requires an assumption on the beginning
graph, and we will thus assume any suitable graph on $\Ot(n)$ edges in the
sparse case and the complete graph in the dense case.

Furthermore, we do not assume that any of the algorithms are allowed to
preprocess the graph. It is often an assumption in the design of amortized
partially dynamic algorithms that one starts with the empty (or complete) graph
in order for the analysis to work. Thus, our results hold for this case.

\section{Preliminaries}

\subparagraph*{Notation}
Throughout the paper we assume that matrices are boolean. Thus the output
of a vector-matrix-vector multiplication will always be a single bit. We
use $[n]$ to denote the set $\{0,\ldots,n-1\}$.

\subparagraph*{Online vector-matrix-vector multiplication}

We will consider the online vector-matrix-vector multiplication problem of
\cite{HenzingerKNS15}:

\begin{definition}[OuMv problem \cite{HenzingerKNS15}]\label{defn:oumv}
    Let $M$ be a binary $n\times n$ matrix than can be preprocessed. After
    preprocessing $n$ vector pairs $(u^1,v^1), \ldots, (u^n,v^n)$ arrive one at
    a time and the task is to compute $(u^i)^T M v^i$ before being presented with
    the $i+1$th vector pair for every $i$.
\end{definition}

In \cite{HenzingerKNS15} they showed that the OMv problem can be reduced to the
OuMv problem. They also came up with the following conjecture:

\begin{conjecture}[\cite{HenzingerKNS15}]\label{conj:omv}
    There is no algorithm for the OMv problem (and thus the OuMv problem)
    running in time $O(n^{3-\eps})$ for any $\eps>0$.
\end{conjecture}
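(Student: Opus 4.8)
First, a caveat on what can be \emph{proved} here: the assertion that no $O(n^{3-\eps})$ algorithm exists for the OMv problem is a \emph{conjecture} of \cite{HenzingerKNS15}, adopted rather than derived; the most I would do with that half of the statement is recall its motivation --- OMv is the natural ``online'' restriction of Boolean matrix--vector (equivalently matrix--matrix) multiplication, for which no truly subcubic combinatorial algorithm is known and where the online constraint provably defeats the usual fast-matrix-multiplication batching. The genuinely provable content is the parenthetical ``(and thus the OuMv problem)'': namely, a hypothetical $O(n^{3-\eps})$ algorithm for the OuMv problem of Definition~\ref{defn:oumv} would yield an $O(n^{3-\eps'})$ algorithm for OMv for some $\eps'>0$, so that the conjecture transfers. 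Thus the plan is to exhibit a reduction from OMv to OuMv.

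The reduction would go as follows. Given an OMv instance with fixed matrix $M$ and online vectors $v^1,\dots,v^n$, note that the $j$-th coordinate of $Mv^t$ equals $(e_j)^{T} M v^t$, which is precisely one OuMv query against $M$ with left vector $e_j$ and right vector $v^t$; hence the whole vector $Mv^t$ can be read off from the $n$ answers $(e_1)^{T}Mv^t,\dots,(e_n)^{T}Mv^t$. For a fixed $t$ these $n$ queries are mutually non-adaptive, so issuing them respects the online order of OMv: at OMv-step $t$ I would feed $(e_1,v^t),\dots,(e_n,v^t)$, reassemble $Mv^t$, and output it before $v^{t+1}$ arrives. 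Over all $t$ this is $n^2$ OuMv pairs against the \emph{same} matrix $M$, whereas a single OuMv instance admits only $n$ pairs after one round of preprocessing; I would therefore run $n$ OuMv instances in parallel, all sharing the one preprocessing of $M$, with instance $j$ receiving $(e_j,v^1),\dots,(e_j,v^n)$, and interleave their query phases step by step.

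The main obstacle --- and the real content --- is the cost accounting. The black-box simulation above pays for preprocessing $M$ once but for an OuMv query phase $n$ times, i.e.\ about $n\cdot n^{3-\eps}$, which does \emph{not} beat the trivial $O(n^3)$ for OMv. To recover the tight $n^{3-o(1)}$ bound one must reduce instead to OuMv on much smaller matrices: reshape $M$ into a grid of $m\times m$ blocks with $m=n^{1-\gamma}$, run the OuMv algorithm on the blocks (so its $O(m^{3-\eps})$ cost is spent at a finer granularity), reassemble $Mv^t$ from the block products by coordinatewise ORs across the column-blocks, and pick $\gamma$ as a function of $\eps$ so that the number of OuMv queries times the per-query cost, \emph{plus} the OR-recombination and shared-preprocessing overheads, remains $O(n^{3-\eps'})$. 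This is exactly the point at which \cite{HenzingerKNS15} pass to a parameterized (rectangular, batched) form of OuMv rather than the bare $n\times n$ version. Pinning down the right $\gamma$ and checking the block count, query count, and recombination cost for that choice is the step I expect to absorb essentially all the effort; the online interleaving and the remaining bookkeeping are then routine.
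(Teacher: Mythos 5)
You have diagnosed the status of this statement correctly: it is a conjecture, and the paper neither proves it nor attempts to --- it simply imports it from \cite{HenzingerKNS15}. Even the parenthetical ``(and thus the OuMv problem)'' is not argued in this paper; the only supporting remark is the sentence ``In \cite{HenzingerKNS15} they showed that the OMv problem can be reduced to the OuMv problem.'' So there is no in-paper proof to compare yours against, and your decision to treat the OMv-to-OuMv reduction as the sole provable content is exactly how the paper (implicitly) treats it as well.

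On the reduction itself, your outline points in the same direction as the actual argument of \cite{HenzingerKNS15} (passing to a parameterized, rectangular/batched form of OuMv), but be aware that what you defer is genuinely the whole proof, and your specific blocking sketch has a gap beyond mere bookkeeping: an OuMv query on a block returns a \emph{single bit} $u^T M_{IJ} v_J$, so you cannot ``reassemble $Mv^t$ from the block products by coordinatewise ORs'' --- recovering all $n$ output coordinates forces you back to unit-vector (or search-over-nonzero) queries, and with square $m\times m$ blocks, $m=n^{1-\gamma}$, the count (coordinates $\times$ column-blocks $\times$ amortized per-query cost $\times$ $n$ rounds) does not drop below $n^3$ for any admissible $\gamma$. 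The reduction in \cite{HenzingerKNS15} circumvents this by proving hardness of OuMv with all three dimensions (matrix sides and number of vector pairs) as free polynomial parameters and exploiting that trade-off, which is materially more than the square-block-plus-OR scheme you describe. Since the statement is a conjecture and the paper cites rather than proves the OuMv consequence, this does not put you at odds with the paper --- but as a standalone proof of the parenthetical claim, your proposal is an outline with its central step missing.
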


\subparagraph*{Triangle collection}

We will also consider the triangle collection problem of \cite{AbboudVY15}:

\begin{definition}[Triangle collection \cite{AbboudVY15}]
    Given a node-colored graph $G$, is it true that for every triplet of colors
    $a,b,c$ there exists a triangle $(u,v,w)$ in $G$ where $u$ has color $a$,
    $v$ has color $b$ and $w$ has color $c$?
\end{definition}

In fact, we will consider the more structured triangle collection* (TC*)
problem which they also used in \cite{AbboudVY15}

\begin{definition}[Triangle collection* \cite{AbboudVY15}]
    Let $n,\Delta,p$ be parameters and let $G$ be an undirected node-colored
    tripartite graph with partitions $A,B,C$. Let $G$ be any graph with the
    following structure:
    \begin{itemize}
        \item Each partition has its own $n$ colors and we denote these by the
            numbers of $[n]$ for each partition.
        \item $A$ contains nodes of the form $a^i_j$, where $i\in [n]$ is the
            color of the node and $j\in [\Delta]$.
        \item $B$ and $C$ contains nodes of the form $b^i_{j,x}$ and
            $c^i_{j,x}$ where $i\in [n]$ is the color of the node and $j\in
            [\Delta], x\in [p]$.
    \end{itemize}
    And the edges of $G$ are as follows:
    \begin{itemize}
        \item For each $i,i'\in [n]$ and $j\in [\Delta]$ there is an edge from
            $a^i_j$ to $b^{i'}_{j,x}$ for \emph{exactly one} $x$. Similarly
            there is an edge from $a^i_j$ to $c^{i'}_{j,y}$ for exactly one $y$
            (note that $y$ and $x$ need not be the same for the same $j$ and
            $i'$).
        \item There may be an edge between nodes $b^i_{j,x}$ and
            $c^{i'}_{j',y}$ \emph{only if} $j=j'$.
    \end{itemize}
    We ask the following question: Does there exist a triple of colors (one
    color per partition) such that $G$ does not contain a triangle with these
    colors?
\end{definition}

In \cite{AbboudVY15} it was shown that this problem does not have a truly
subcubic algorithm unless Conjecture~\ref{conj:tc} is false.

It will be important that the reductions from these problems to TC* hold even
when $\Delta$ and $p$ are bounded by $\polylog(n)$.

\section{Incremental maximum matching}
We will reduce from the OuMv problem of Definition~\ref{defn:oumv}.
Observe that the OuMv problem is equivalent to the following statement:
For each vector pair $u^i,v^i$ determine whether indices $j,k$ exist, such that
$u^i_j = v^i_k = M_{jk} = 1$. In order to model this as an incremental maximum
matching problem we construct the following graph:
Create $6$ copies of $2n$ nodes and name these $S, A, B, C, D, T$.
Partition $A$ into $n$ pairs of nodes $a^\ell_1, a^r_1, \ldots, a^\ell_n,
a^r_n$. Do the same for $S, B, C, D, T$. Add the edges $(a^\ell_i, a^r_i)$ for
each $i$ and do the same for $B, C, D$. Now for each $i,j$ add the edge $(b^r_i,
c^\ell_j)$ if $M_{ij} = 1$. Observe that this graph has a unique maximum
matching each $(\ell,r)$ pair. Observe also that the graph is bipartite.
Now we do the following $n$ phases -- one for each $u^i,v^i$ vector pair.
\begin{enumerate}
    \item For each $j$ such that $u^i_j = 1$ add the edge $(a_i^r,
        b^\ell_j)$.
    \item For each $j$ such that $v^i_j = 1$ add the edge
        $(c_j^r,d_i^\ell)$.
    \item Add the edges $(s^r_i,a^\ell_i)$ and $(d^r_i,t^\ell_i)$.
    \item Query the size of a maximum matching.
    \item Add the edges $(s^\ell_i,s^r_i)$ and $(t^\ell_i,t^r_i)$.
\end{enumerate}
This is illustrated in Figure~\ref{fig:matching}.

\begin{figure}[htbp]
    \centering
    \includegraphics[width=.7\textwidth]{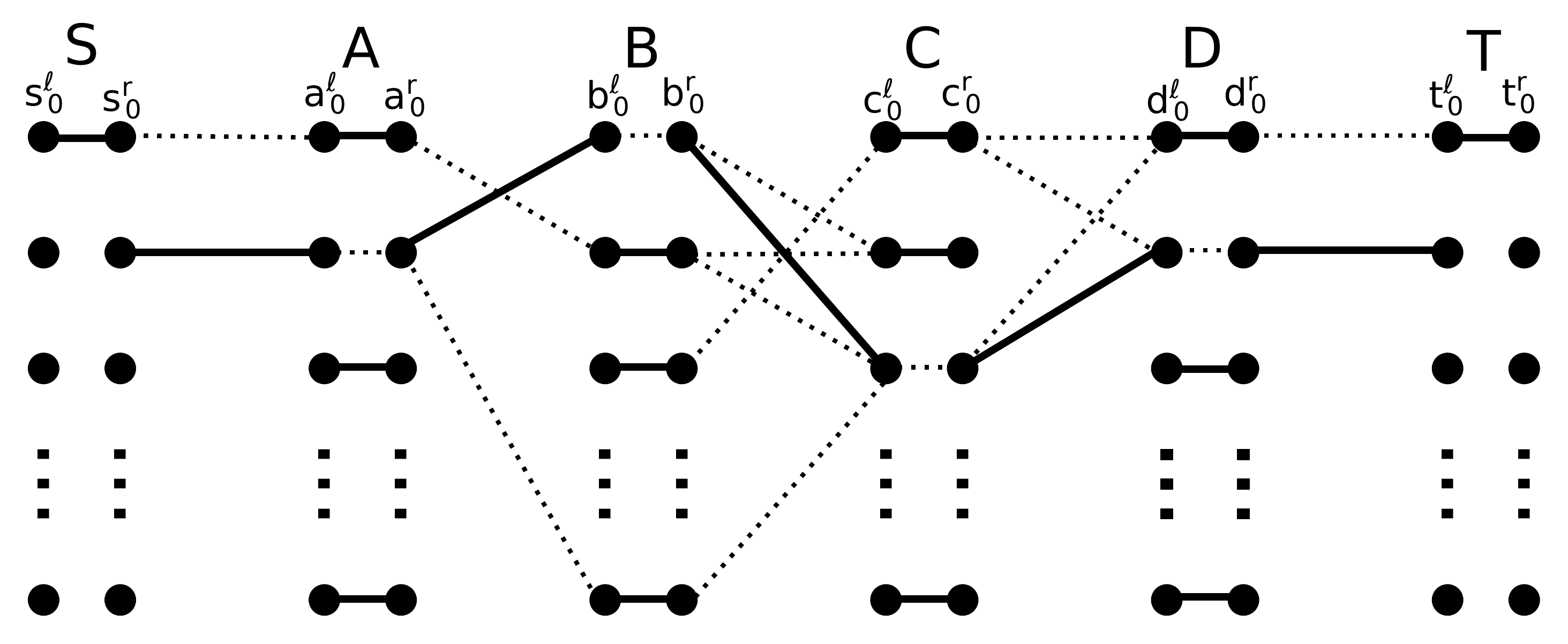}
    \caption{Reduction to incremental maximum matching.}
    \label{fig:matching}
\end{figure}

\begin{lemma}\label{lem:max_match}
    Let the setting be as above and let the phases be numbered $0, 1, \ldots,
    n-1$. Then the size of the maximum matching during the $i$th phase is exactly
    $4n + 2i + 1$ if the resulting vector-matrix-vector product is $1$ and
    $4n+2i$ otherwise.
\end{lemma}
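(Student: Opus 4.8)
The plan is to analyze the maximum matching phase by phase, keeping track of exactly which vertices are matched. The key observation is that the base graph (before any phases) has a unique perfect matching on $B\cup C\cup D$ and the $(\ell,r)$ edges of $S,A,T$ are not yet present, so initially the only matched edges are the $2n$ pairs inside $B,C,D$ (that is $3n$ edges), giving a matching of size $3n$ when $i=0$ before step 3 — but I should be careful to set up the bookkeeping so that the count "$4n+2i$ or $4n+2i+1$" is exactly the value queried in step 4. I would first argue an invariant: after the end of phase $i$ (i.e., after step 5), phases $0,\dots,i$ have each contributed a "closed gadget" consisting of the path $s^\ell_i - s^r_i - a^\ell_i - a^r_i - \cdots$ collapsed back so that $s^\ell_i s^r_i$ and $t^\ell_i t^r_i$ are forced into the matching, contributing a fixed additive amount per completed phase; the variable "$+1$" only ever comes from the single phase currently in progress.

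**The core combinatorial claim.** The heart of the argument is a local statement about one phase: during phase $i$, after step 3 has added $(s^r_i,a^\ell_i)$ and $(d^r_i,t^\ell_i)$, there is an augmenting path from $s^\ell_i$ to $t^r_i$ in the current graph \emph{if and only if} there exist $j,k$ with $u^i_j = v^i_k = M_{jk} = 1$. I would prove the "if" direction by exhibiting the explicit alternating path
$$
s^\ell_i - s^r_i - a^\ell_i - a^r_i - b^\ell_j - b^r_j - c^\ell_k - c^r_k - d^\ell_i - d^r_i - t^\ell_i - t^r_i,
$$
checking that the odd edges $(s^\ell_i s^r_i), (a^\ell_i a^r_i),\dots$ are exactly the non-matching edges and the even edges $(s^r_i a^\ell_i), (a^r_i b^\ell_j),\dots$ are the newly inserted / matched ones, so that flipping along it increases the matching by one. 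The point is that the edge $(a^r_i b^\ell_j)$ exists precisely when $u^i_j=1$, the edge $(c^r_k d^\ell_i)$ exists precisely when $v^i_k=1$, and the edge $(b^r_j c^\ell_k)$ exists precisely when $M_{jk}=1$, so the three bits being simultaneously $1$ is exactly what is needed for the path to exist.

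**The harder direction.** For the converse — no augmenting path implies the product is $0$ — I would argue that any augmenting path out of the free vertex $s^\ell_i$ is forced: its first edge must be $(s^\ell_i,s^r_i)$ (the only edge at $s^\ell_i$ not yet deleted... actually $s^\ell_i$ has only this one incident edge in the incremental graph), then it must leave $s^r_i$ along $(s^r_i,a^\ell_i)$ since $(s^\ell_i s^r_i)$ was just used, then leave $a^\ell_i$ along its matching edge $(a^\ell_i,a^r_i)$, then leave $a^r_i$ along some $(a^r_i,b^\ell_j)$ with $u^i_j=1$, and so on; the bipartite/layered structure and the fact that each left vertex has a unique matching partner funnels the path deterministically through one $b$-pair, one $c$-pair (requiring $M_{jk}=1$), one $d$-pair (requiring $v^i_k=1$), and finally to $t^r_i$. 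This "forcing" step — showing the path cannot wander, backtrack into earlier phases' gadgets, or use the $(s^r,a^\ell)$ and $(d^r,t^\ell)$ edges of other phases — is the main obstacle, and I expect to handle it by noting that in all \emph{earlier} phases the pairs $s^\ell_{i'}s^r_{i'}$ and $t^\ell_{i'}t^r_{i'}$ are matched and their $\ell$-vertices have no other incident edges, so those gadgets are "sealed off," while in \emph{later} phases none of the relevant edges exist yet.

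**Assembling the count.** Finally I would combine: before step 3 of phase $i$, the matching has size $4n + 2i$ — the $3n$ internal edges of $B,C,D$, plus $2i$ edges $s^\ell_{i'}s^r_{i'}, t^\ell_{i'}t^r_{i'}$ from the $i$ completed phases $0,\dots,i-1$, plus... I need to recount, but the structural invariant pins it down; after step 3 adds two edges and possibly an augmenting path exists, the queried value in step 4 is $4n+2i+1$ exactly when the augmenting path exists, i.e. exactly when the $i$-th vector-matrix-vector product is $1$, and $4n+2i$ otherwise, since the graph still has its forced matching edges and no augmentation is possible. Maximality in the "otherwise" case follows because the augmenting-path characterization (Berge's theorem) gives that a matching is maximum iff no augmenting path exists, and we have shown the only candidate augmenting path is the one through phase $i$'s gadget. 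The remaining work is purely arithmetic bookkeeping of the additive constants, which I would verify on the base case $i=0$ and propagate.
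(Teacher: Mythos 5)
Your overall approach — argue that before phase $i$ the maximum matching has size $4n+2i$ and is a perfect matching of the edge-induced subgraph, then show that step 3 can increase the matching by at most one, with the increase happening exactly when an augmenting path through $A\!-\!B\!-\!C\!-\!D$ exists, which in turn happens exactly when some $j,k$ satisfy $u^i_j=M_{jk}=v^i_k=1$ — is the same structural argument as the paper's (the paper phrases it as "the larger matching must contain these edges" rather than invoking Berge explicitly, but the content is identical). However, there are two concrete errors in the details.

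First, you misread the base graph. The construction adds the internal pair edges $(a^\ell_j,a^r_j)$ to $A$ as well as to $B,C,D$; only $S$ and $T$ start without their $(\ell,r)$ edges. So the initial maximum matching has size $4n$, not the $3n$ you assert. This is exactly the missing $n$ you were puzzled about in the "assembling the count" paragraph, and it is not just loose bookkeeping: your claim that $A$'s pair edges "are not yet present" contradicts the construction and would make the forcing argument for $a^\ell_i$ fail.

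Second, and more seriously, your augmenting path has the wrong endpoints and uses edges that do not exist at query time. The query happens in step 4, but $(s^\ell_i,s^r_i)$ and $(t^\ell_i,t^r_i)$ are inserted only in step 5. At query time, $s^\ell_i$ and $t^r_i$ are isolated vertices; the two unmatched vertices that the new edges of step 3 introduce are $s^r_i$ and $t^\ell_i$, and the augmenting path, if it exists, is
\[
s^r_i \,{-}\, a^\ell_i \,{-}\, a^r_i \,{-}\, b^\ell_j \,{-}\, b^r_j \,{-}\, c^\ell_k \,{-}\, c^r_k \,{-}\, d^\ell_i \,{-}\, d^r_i \,{-}\, t^\ell_i .
\]
With the correct endpoints the alternation is consistent (odd edges unmatched, even edges like $(a^\ell_i,a^r_i)$ matched), whereas your stated parity — calling $(a^\ell_i,a^r_i)$ a non-matching edge — is inconsistent with the fact that it is a base matching edge, another symptom of starting the path one vertex too early on each side. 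Both errors stem from misreading which pair edges exist when; once corrected, the forcing argument you outline (and which the paper also uses) goes through.
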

\begin{proof}
    Note that prior to any of the $i$ phases the size of the maximum matching
    is exactly $4n + 2i$, which is also a perfect matching of the graph induced
    by the edges. To see this observe that each $s^\ell_0, \ldots,
    s^\ell_{i-1}$ must be matched to its corresponding
    $s^r_0,\ldots,s^r_{i-1}$, and this is the only edge incident to the
    $\ell$-nodes. As a consequence of this, each $a^\ell_j$ must be matched
    with $a^r_j$, and so on.

    Now consider the $i$th phase. Adding any edge $(a^r_i, b^\ell_j)$ or
    $(c^r_j,d^\ell_i)$ cannot increase the size of the maximum matching, as the
    size of the subgraph induced by the edges of the graph does not increase --
    i.e. all nodes with edges incident to them are already matched.

    Assume that adding the edges $(s^r_i,a^\ell_i)$ and $(t^\ell_i,d^r_i)$
    increases the matching. The matching can increase by at most $1$, as only
    two more nodes can be matched. Furthermore the matching must now contain
    edges as follows
    \[
        (s^r_i,a^\ell_i), (a^r_i,b^\ell_j), (b^r_j, c^\ell_k), (c^r_k,
        d^\ell_x), (d^r_x, t^\ell_y)\ .
    \]
    Now observe that each $t^\ell_y$ for $y < i$ \emph{must} be matched to
    $t^r_y$, as the right nodes have no other incident edges and all nodes have
    to be matched for the size of the matching to increase. Thus we must have
    $y=i$ in the list above, but this means that we have exactly found a pair
    $j,k$ such that $u^i_j = v^i_k = M_{jk} = 1$ and the vector-matrix-vector
    product is thus $1$.

    Conversely, assume that the vector-matrix-vector product is $1$, then such
    an index pair $j,k$ must exist and we can find the following matching of
    size $4n+2i+1$: Match the edges
    \[
        (s^r_i,a^\ell_i), (a^r_i,b^\ell_j), (b^r_j, c^\ell_k), (c^r_k,
        d^\ell_i), (d^r_i, t^\ell_i)\ .
    \]
    For all $x < i$ add the edges $(s^\ell_x,s^r_x)$ and $(t^\ell_x,t^r_x)$ to
    the matching. For all $x\ne i$ add the edges $(a^\ell_x,a^r_x)$ and
    $(d^\ell_x,d^r_x)$ to the matching. And for all $x\ne j$ and $y\ne k$ add
    the edges $(b^\ell_x,b^r_x)$ and $(c^\ell_y,c^r_y)$ to the matching. This
    matches all nodes incident to an edge and has size $4n+2i+1$. This is also
    exactly the matching illustrated in Figure~\ref{fig:matching} for $i=1$.
\end{proof}

It follows from Lemma~\ref{lem:max_match} that we can solve the OuMv problem
correctly via this reduction. The reduction creates a graph with $O(n)$ nodes
and $O(n^2)$ edges. We perform $O(n^2)$ insertions and $O(n)$ queries giving
the result in Theorem~\ref{thm:matching}

\section{Maximum flow}
In order to show Theorem~\ref{thm:maxflow2} we will use a similar graph
construction as have been used numerous times before
\cite{PatrascuW10,RodittyW13,ChechikLRSTW14,AbboudV14}:
First partition the variables of the SAT problem into two
groups $A$ and $B$ of $n/2$ variables each. For each possible assignment to the variables
in $A$ we create a node in our graph $G$ (and likewise for $B$). Furthermore,
for each clause of the SAT formula, we create a node as well. We denote the
corresponding sets of nodes by $A, B, C$. Set $N = 2^{n/2} = |A| = |B|$. For
each pair of nodes $a\in A, c\in C$ we add the directed edge $(a,c)$ with
capacity $N$ if the partial assignment $a$ \emph{does not} satisfy the clause
$c$. Similarly, for each pair of nodes $b\in B, c\in C$ we add the directed
edge $(c,b)$ with capacity $1$ if $b$ does not satisfy $c$. Finally we add two
nodes $s,t$ and add edges $(b,t)$ with capacity $1$ for each $b\in B$.

We now continue in phases with a phase for each $a\in A$. Denote these nodes
by $a_1,a_2,\ldots,a_N$:
\begin{enumerate}
    \item Add the edge $(s,a_i)$ with capacity $N$.
    \item Query the maximum flow between $s$ and $t$.
    \item Add the edge (``shortcut'') $(a_i,t)$ with capacity $N$.
\end{enumerate}
This is illustrated in Figure~\ref{fig:maxflow}.
\begin{figure}[htbp]
    \centering
    \includegraphics[width=.6\textwidth]{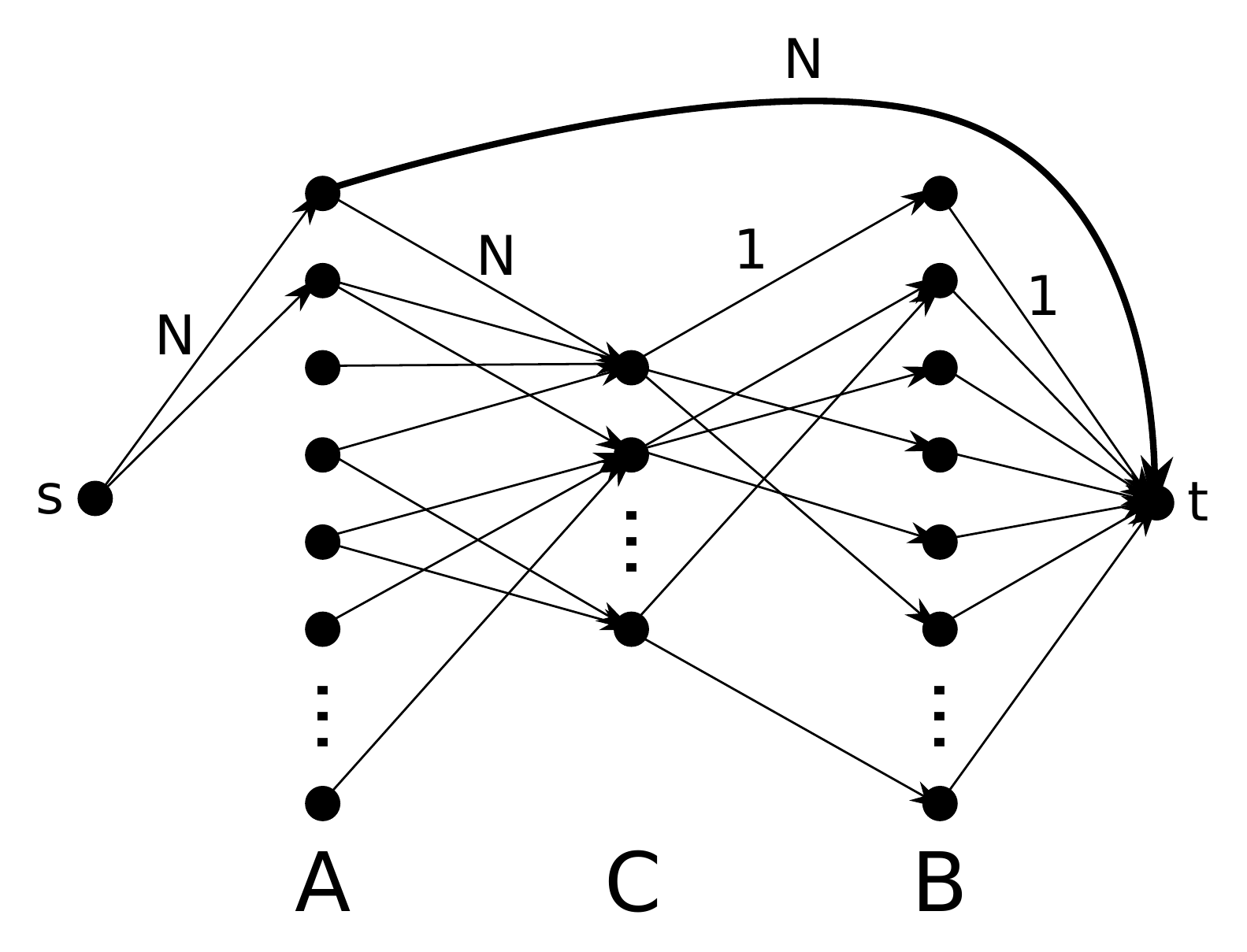}
    \caption{Illustration of the incremental construction for maximum flow.}
    \label{fig:maxflow}
\end{figure}

\begin{lemma}\label{lem:maxflow_correct}
    Let the setup be as described above. If the $st$ flow returned during any
    of the $i$ phases is $< i\cdot N$, then the SAT formula is satisfiable.
    Otherwise the formula is not satisfiable.
\end{lemma}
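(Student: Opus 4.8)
The plan is to show that the flow value queried in phase $i$ detects exactly whether there is a satisfying assignment of the formula whose restriction to the $A$-variables is the partial assignment $a_i$; summing over all phases then gives the lemma. Throughout I index the phases $1,\dots,N$ and keep careful track of which arcs are live when the query of phase $i$ is issued: the static base graph, the source arcs $(s,a_1),\dots,(s,a_i)$ of capacity $N$, and the shortcut arcs $(a_1,t),\dots,(a_{i-1},t)$ of capacity $N$. Write $f_i$ for the queried value.

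First I would record the trivial bound $f_i\le iN$, since the only arcs entering $t$ are the $i-1$ shortcuts (total capacity $(i-1)N$) and the $N$ unit-capacity arcs $(b,t)$, $b\in B$, which form an $s$--$t$ cut of capacity $iN$. Next I would prove the key structural fact: $f_i=iN$ is attainable \emph{if and only if} $a_i$ alone can route $N$ units along $a_i\to C\to B\to t$. For the ``if'' direction, route each remaining $a_j$ ($j<i$) straight through its shortcut, contributing $(i-1)N$, and add $a_i$'s $N$ units; these flows share no arc, so the total is $iN$. For the ``only if'' direction, $f_i=iN$ saturates the cut around $t$, hence saturates every shortcut, which forces each $a_j$ with $j<i$ to forward its full incoming $N$ units along $(a_j,t)$ and therefore to push nothing into $C$; by flow conservation the only node among $a_1,\dots,a_i$ feeding $C$ is then $a_i$, so the $N$ units crossing the $(b,t)$ arcs all originate at $a_i$.

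It then remains to decide, combinatorially, when $a_i$ can send one unit of flow to every node of $B$. Because each arc $(a_i,c)$ has capacity $N\ge|B|$, no clause node is a bottleneck, so $a_i$ can deliver a unit to a given $b\in B$ exactly when some clause $c$ carries both arcs $(a_i,c)$ and $(c,b)$, i.e. when there is a clause satisfied by neither the partial assignment $a_i$ nor the partial assignment $b$. Hence $a_i$ can saturate all of $B$ iff for every $b$ such a clause exists; negating, $f_i<iN$ iff there is a $b\in B$ for which \emph{every} clause is satisfied by $a_i$ or by $b$, that is, iff the combined assignment $(a_i,b)$ satisfies the whole formula. Since every truth assignment to the $n$ variables is uniquely of the form $(a_i,b)$ for some phase $i$ and some $b\in B$, the formula is satisfiable precisely when $f_i<iN$ in some phase, which is the statement.

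The step I expect to be the main obstacle is the ``only if'' half of the structural fact: one has to argue that fully using the $t$-cut leaves the partial assignments $a_1,\dots,a_{i-1}$ ``consumed'' by their shortcuts, so that the $B$-portion of any maximum flow is entirely attributable to $a_i$. This needs a clean min-cut / conservation argument and is precisely the reason the shortcut arcs are introduced; by contrast, the capacity bookkeeping, the ``if'' direction, and the reduction to the ``common unsatisfied clause'' condition are routine.
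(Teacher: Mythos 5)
Your proof is correct and follows essentially the same approach as the paper's: establish that prior to (and during) phase $i$ the shortcuts contribute exactly $(i-1)N$, use the cut around $t$ together with flow conservation at the $a_j$ ($j<i$) to argue that any flow reaching $B$ must originate at $a_i$, and then translate ``$a_i$ can saturate all of $B$'' into the nonexistence of a $b$ forming a satisfying assignment with $a_i$. You are somewhat more explicit than the paper about the min-cut/conservation step that attributes the $B$-flow entirely to $a_i$, but it is the same underlying argument.
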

\begin{proof}
    Observe, that prior to the $i$th phase, the flow is exactly $(i-1)\cdot N$,
    as we can use the paths $(s,a_j), (a_j, t)$ for each $j< i$, which has
    capacity $N$ and exactly $(i-1)\cdot N$ flow leaves $s$.

    Now assume that the partial formula corresponding to $a_i$ can be completed
    to a satisfying assignment. In this case, there must be some node $b\in B$,
    for which there is no path from $a_i$ to $b$. This follows because such a
    path has to go through a node $c\in C$, but then both $a_i$ and $b$ do not
    satisfy the clause $c$, which is a contradiction. However, the only way to
    send flow from $a_i$ to $t$ is through the nodes $b\in B$ and thus it is
    not possible to send all $N$ units of flow from $a_i$ to $t$.

    Now assume that the flow is $<i\cdot N$, then there must be
    some $b\in B$ such that there is no path from $a_i$ to $b$. Otherwise, we
    could route $N$ units of flow from $a_i$ to $t$ via the nodes of $B$ and
    the remaining $(i-1)\cdot N$ units through the ``shortcuts''. It now
    follows that $a_i$ and $b$ together satisfy all clauses (otherwise there
    would be a path) and thus the CNF formula is satisfiable.

    Since this is true for all of the $i$ phases, the statement of the lemma
    follows.
\end{proof}

As a consequence of Lemma~\ref{lem:maxflow_correct} we may use the above
procedure to solve the given SAT problem. By the sparsification lemma of
\cite{ImpagliazzoPZ01} it follows that we can assume the graph has $O(N)$ nodes
and $\Ot(N)$ edges and we perform a total of $\Ot(N)$ insertions and queries.
The result of Theorem~\ref{thm:maxflow2} thus follows directly.

\section{Diameter}
In this section we show how to obtain conditional lower bounds for the problem
of approximating the diameter of an unweighted graph within a factor of
$4/3-\eps$.

\subsection{A graph construction}
We will first describe the graph structure we use.

\begin{definition}
    Let $G$ be an instance of the TC* problem as defined above. We will define
    the graph $H_{\gamma,k}(G)$. The idea is that $H_{\gamma,k}(G)$
    ``corresponds'' to the colors $\{kn^\gamma, \ldots,(k+1)n^\gamma -1\}$ of $A$.
    Thus $k$ is a number in $[n^{1-\gamma}]$. The nodes of this graph are as
    follows:
    \begin{itemize}
        \item The nodes $B$ and $C$ of $G$.
        \item For each color $i\in \{kn^\gamma, (k+1)n^\gamma -1\}$ of $A$ we
            add the nodes $a^i_0,\ldots, a^i_{n-1}$ and
            $t^i_0,\ldots,t^i_{n-1}$.
        \item We also add several special nodes: A ``master node'' $u$,
            $n^\gamma$ ``skip nodes'' $v_i$ and three ``connector nodes''
            $w_1,w_2,w_3$.
    \end{itemize}
    For a color $i\in \{kn^\gamma,(k+1)n^\gamma-1\}$ we denote the nodes
    $a^i_0,\ldots a^i_{n-1}$ by $A_i$ and the collection of all $A_i$s by $A$.
    We do the same for $T_i$ and $T$.

    The edges of $H_{\gamma,k}(G)$ are as follows:
    \begin{itemize}
        \item Add the edges between $B$ and $C$ in $G$.
        \item Connect the node $w_1$ to each node of $A$ and $w_2$.
        \item Connect $w_2$ to each node of $B$ and $C$ as well as $w_3$ and
            the master node $u$.
        \item Connect $w_3$ to each node of $T$.
        \item Connect $u$ to all nodes $v_i$.
        \item For each $i\in \{kn^\gamma, (k+1)n^\gamma -1\}$ do as follows:
            \begin{itemize}
                \item Connect $v_i$ to all nodes of $T\setminus T_i$ and to all
                    nodes of $A_i$.
                \item For each $i'\in [n]$ and each edge $(a^i_j,
                    b^{i'}_{j,x})\in G$ add the edge $(a^i_{i'},
                    b^{i'}_{j,x})$.
                \item For each $i'\in [n]$ and each edge $(a^i_j,
                    c^{i'}_{j,x})\in G$ add the edge $(c^{i'}_{j,x},
                    t^i_{i'})$.
            \end{itemize}
    \end{itemize}
\end{definition}
An overview of the graph $H_{\gamma,k}(G)$ is illustrated in Figure~\ref{fig:34diam}
and a more detailed view in Figure~\ref{fig:tc_diam_det}.

\begin{figure}[htbp]
    \centering
    \includegraphics[width=.7\textwidth]{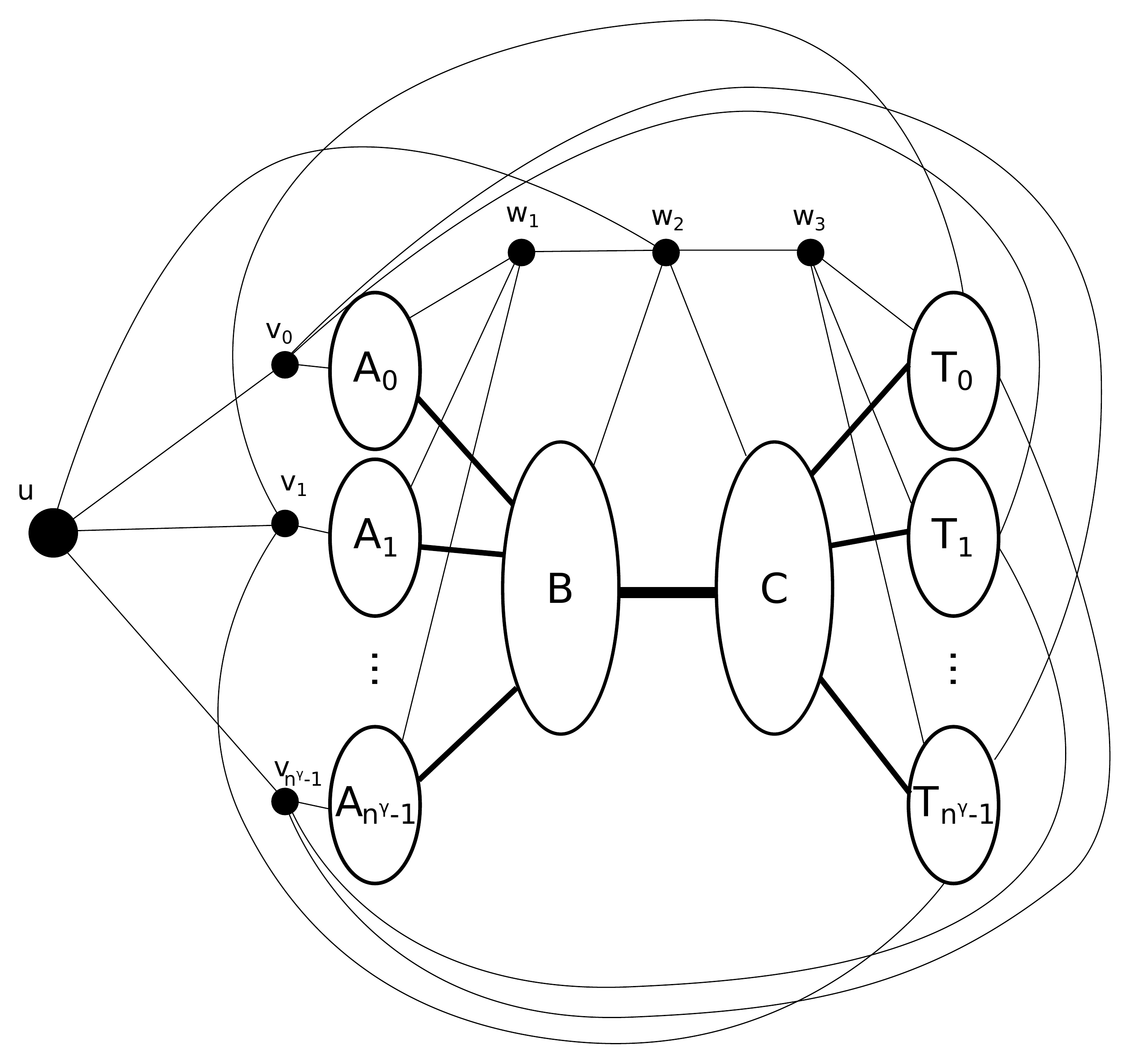}
    \caption{Diameter structure.}
    \label{fig:34diam}
\end{figure}
\begin{figure}[htbp]
    \centering
    \includegraphics[width=.6\textwidth]{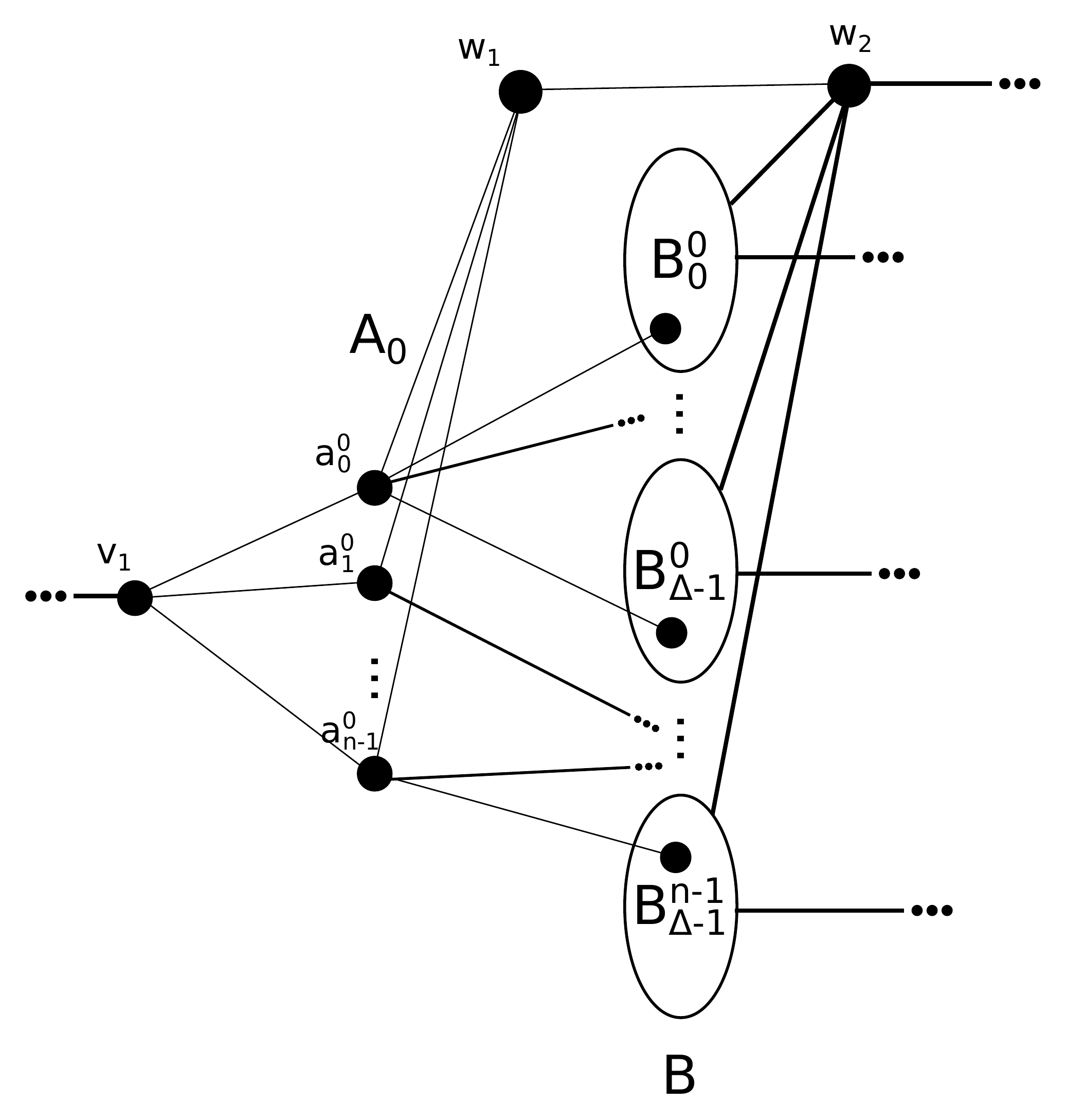}
    \caption{Diameter structure.}
    \label{fig:tc_diam_det}
\end{figure}

The idea is that length three paths between $A_i$ and $T_i$ correspond to
triangles in $G$ containing the color $i$ of $A$. Each of the $n$ nodes in
$A_i$ thus correspond to picking a color from $B$ and each of the $n$ nodes in
$T_i$ correspond to picking a color from $C$. If two such nodes don't have a
length three path there is no triangle in $G$ of the corresponding triplet of
colors. In this case the connector nodes ensure that there is a length four
path between the nodes. The master and skip nodes ensure that all other nodes
have distance at most $3$. This is captured by the following lemma:

\begin{lemma}\label{lem:diam_graph}
    Let $G$ be an instance to the TC* problem and let $H_{\gamma,k}(G)$ be as
    defined above. Let $i\in \{kn^\gamma, (k+1)n^\gamma-1\}$ be a color of $A$
    and let $\alpha,\beta\in [n]$ be colors of $B$ and $C$ respectively. Then
    the distance from $a^i_\alpha$ to $t^i_\beta$ in $H_{\gamma,k}(G)$ is $3$
    if the colors $i,\alpha,\beta$ have a triangle in $G$ and $4$ otherwise.
\end{lemma}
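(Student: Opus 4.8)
The plan is to verify the two distance claims separately, in each case exhibiting an explicit short path for the upper bound and arguing no shorter path exists for the lower bound. Throughout I will keep in mind the structural constraints of the TC* graph $G$: an edge $(a^i_j, b^{i'}_{j,x})$ exists for exactly one $x$ per $(i,i',j)$, an edge $(a^i_j, c^{i'}_{j,y})$ exists for exactly one $y$, and an edge between $b^i_{j,x}$ and $c^{i'}_{j',y}$ can exist only when $j = j'$. Also note that in $H_{\gamma,k}(G)$, the only edges incident to a node $a^i_{i'} \in A_i$ go to $w_1$, to $v_i$, and to certain $b$-nodes; and the only edges incident to $t^i_{i'} \in T_i$ go to $w_3$, to the skip nodes $v_{i''}$ for $i'' \neq i$, and to certain $c$-nodes.

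First I would treat the case where the colors $i,\alpha,\beta$ have a triangle in $G$, say $(a^i_j, b^\alpha_{j,x}, c^\beta_{j,y})$ with all three edges $(a^i_j,b^\alpha_{j,x})$, $(a^i_j,c^\beta_{j,y})$, $(b^\alpha_{j,x},c^\beta_{j,y})$ present (the first two forcing $x,y$ to be the specific witnesses). By the edge rules of $H_{\gamma,k}(G)$, the path $a^i_\alpha - b^\alpha_{j,x} - c^\beta_{j,y} - t^i_\beta$ has length $3$, giving $\mathrm{dist}(a^i_\alpha, t^i_\beta) \le 3$. For the matching lower bound I must rule out a path of length $\le 2$. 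There is no edge directly between $A$ and $T$, so length $1$ is impossible. A length-$2$ path would require a common neighbor; inspecting the neighbor lists, $a^i_\alpha$'s neighbors are $w_1$, $v_i$, and $b$-nodes, while $t^i_\beta$'s neighbors are $w_3$, the $v_{i''}$ for $i''\neq i$, and $c$-nodes — these sets are disjoint (note $v_i$ is not a neighbor of $t^i_\beta$ since the skip edge from $v_i$ goes to $T \setminus T_i$, and $b$- and $c$-nodes are distinct), so no common neighbor exists and the distance is exactly $3$.

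Next I would handle the case where $i,\alpha,\beta$ have no triangle in $G$. For the upper bound of $4$, I would use the connector path $a^i_\alpha - w_1 - w_2 - w_3 - t^i_\beta$, which always exists by construction and has length $4$. The main obstacle is the lower bound here: I must show the distance is not $3$ (it is already $\ge 3$ by the argument above, since the length-$\le 2$ analysis did not use the triangle hypothesis). So suppose for contradiction there is a path of length exactly $3$ from $a^i_\alpha$ to $t^i_\beta$. The first step leaves $a^i_\alpha$ to one of $w_1$, $v_i$, or some $b$-node; the last step enters $t^i_\beta$ from one of $w_3$, some $v_{i''}$ with $i''\neq i$, or some $c$-node. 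I would enumerate the possible middle edges and derive contradictions: a path through $w_1$ or $w_2$ or $w_3$ or $u$ cannot close in three steps (one checks $\mathrm{dist}(w_1, t^i_\beta) \geq 3$, $\mathrm{dist}(v_i, t^i_\beta) \geq 3$, etc., using that $w_1$'s only route toward $T$ passes $w_2, w_3$); a path through two skip nodes or through $u$ is too long; and the only remaining possibility is $a^i_\alpha - b - c - t^i_\beta$ for some $b$-node and $c$-node with an edge between them. But such a path, by the definition of the edges in $H_{\gamma,k}(G)$, would require $b = b^\alpha_{j,x}$ with $(a^i_j, b^\alpha_{j,x}) \in G$, $c = c^\beta_{j',y}$ with $(a^i_{j'}, c^\beta_{j',y}) \in G$, and an edge $(b^\alpha_{j,x}, c^\beta_{j',y}) \in G$ — which by the TC* constraint forces $j = j'$, yielding a genuine triangle $(a^i_j, b^\alpha_{j,x}, c^\beta_{j,y})$ of colors $i,\alpha,\beta$ in $G$, contradicting the hypothesis. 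Hence the distance is exactly $4$, completing the proof.
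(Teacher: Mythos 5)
Your proof is correct and follows essentially the same strategy as the paper's: exhibit the $A_i$--$B$--$C$--$T_i$ path of length 3 when a triangle exists, rule out shorter paths by noting $a^i_\alpha$ and $t^i_\beta$ have disjoint neighborhoods, and show that any length-3 path must have the form $a^i_\alpha$--$b$--$c$--$t^i_\beta$ (the connector/skip/master nodes being too far out of the way), which by the TC* edge constraints forces a triangle of the right colors. You spell out the case analysis for the special nodes more explicitly than the paper (which compresses it into a one-line claim about distances) and you also explicitly verify the diameter-$\le 4$ upper bound via the connector path $a^i_\alpha$--$w_1$--$w_2$--$w_3$--$t^i_\beta$, which the paper defers to a remark after the lemma; both of these are sound additions rather than deviations.
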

\begin{proof}
    Assume first that there is a triangle $a^i_j, b^\alpha_{j,x},
    c^\beta_{j,y}$ for some $j$ in $G$ (note that such a triangle can only
    occur if $j$ is the same for all the three nodes). In this case there is a
    path $a^i_\alpha, b^\alpha_{j,x}, c^\beta_{j,y}, t^i_\beta$ in
    $H_{\gamma,k}(G)$ and thus the distance is at most $3$. Observe also, that
    no node is connected to both $A_i$ and $T_i$ and thus the distance is
    strictly greater than $2$.

    Now assume that the distance from $a^i_\alpha$ to $t^i_\beta$ is $3$. Such
    a path has to go from $A_i$ to $B$ to $C$ to $T_i$ as any node $w_\ell,
    v_\ell$ or $u$ either has distance $3$ to one of $a^i_\alpha$ or
    $t^i_\beta$ or it has distance $2$ to both of them. Now consider a shortest
    path $a^i_\alpha, b, c, t^i_\beta$, where $b$ and $c$ are the nodes of $B$
    and $C$ on this path. Clearly the node $b$ must have color $\alpha$ in $G$
    as it would not have an edge to $a^i_\alpha$ otherwise, and similarly $c$
    must have color $\beta$ in $G$. Thus the path consists of nodes
    $a^i_\alpha, b^\alpha_{j,x}, c^\beta_{j',y}, t^i_\beta$. Since no edge in
    $G$ goes between nodes with different $j$-values we must have $j' = j$. It
    is now clear that the edge $(a^i_\alpha, b^\alpha_{j,x})$ corresponds to
    the edge $(a^i_j,b^\alpha_{j,x})$ in $G$ and the edge
    $(c^\beta_{j,y},t^i_\beta)$ corresponds to the edge $(a^i_j,c^\beta_{j,y})$
    in $G$. Thus, these three nodes form a triangle of the correct color triple
    in $G$.
\end{proof}
Furthermore it is easy to see that the longest distance in $H_{\gamma,k}(G)$ is
at most $4$, thus the diameter is $4$ exactly when one of the corresponding
color triplets do not have a triangle in $G$ and $3$ otherwise.

\subsection{Dynamic}
We will first consider the problem without node additions. For simplicity we
only consider the incremental case and note that the decremental case follows
by deleting edges until we obtain the same graph\footnote{Under the assumption
that the algorithm starts with some suitable graph}.

Given an instance to the TC* problem we create the graph $H_{1,0}(G)$ (that is,
the graph representing all colors of $A$). This graph is created by adding
edges incrementally and has $\Ot(n^2)$ nodes and edges. It follows that an edge
insertion must take $n^{1/2-o(1)}$ time unless Conjecture~\ref{conj:tc} is
false.

Next, we consider the problem with node additions. It was shown in
\cite{KopelowitzPP16} that if we allow node additions in the problem of
incremental maximum matching, it is possible to show stronger lower bounds by
leveraging the amortized running time with the widely used rollback technique.
We here apply the same argument to the problem of incremental diameter
approximation.

The goal is again to construct (a subgraph of) $H_{1,0}(G)$ but we do not start
with all nodes in the graph. We will assume that the amortized
running time of an insert operation is $n^\alpha$ for some $\alpha$. The goal
is to get a bound on $\alpha$ by expressing the total running time in terms of
$\alpha$ and using the assumption on running time for TC*. We let $\hat{n}$
denote the current number of nodes in the graph $G$. We continue as follows:
\begin{enumerate}
    \item Insert all nodes of $B$ and $C$ into the dynamic graph. Also insert
        the nodes $w_1, w_2, w_3$ and $u$. We also insert all the edges
        induced by these nodes in $H_{1,0}(G)$ into the graph.
    \item For each color $i\in [n]$ of $A$ we do a phase:
        \begin{itemize}
            \item We insert the nodes of $A_i, T_i, v_i$ into the dynamic graph
                and all the edges induced by these nodes and the current state
                of the dynamic graph in $H_{1,0}(G)$.
            \item Query the diameter of the graph.
            \item Assume we inserted $k$ edges+nodes in this phase. If the
                total running time of all these insertions was greater than
                $2k\hat{n}^\alpha$ we keep the nodes in the graph. Otherwise we
                rollback all operations of this phase.
        \end{itemize}
\end{enumerate}
We answer the question of the TC* problem according to whether the diameter was
$3$ all the time or not similar to the proof of the case without
node additions.

The goal is now to bound $\alpha$ by using the method of \cite{KopelowitzPP16}. We will do
this by carefully counting the number of ``amortized credit units'' the data
structure has and using this to bound the total number of nodes added to the
graph (i.e. not rolled back).

Observe that after the first step, we have added $\Ot(n^2)$ edges to the graph
and $\Ot(n)$ nodes. Thus the data structure has at most $\Ot(n^{2+\alpha})$
credit at this point (this happens if almost all operations were $O(1)$). Now
consider the total time spent by the algorithm. This can be bounded by
$\Ot(n^2\cdot N^\alpha)$ where $N$ is the number of nodes at the end of all
phases. This is the case since $N\ge N_0$, where $N_0 = \Ot(n)$ is the number
of nodes after the first step and there are at most $\Ot(n^2)$ total
operations. Note that this would not be the case if we did not have a bound on
the cost of the rolled back operations, but we only rollback the cheap
operations, so this is okay. We wish to express $N$ in terms of $n$ and
$\alpha$ in order to express the total running time in terms of these.

Observe, that every time we keep the added nodes in the graph, the data
structure spent at least twice the amortized cost. Since
we started out with $\Ot(n^{2+\alpha})$ credit it must be true that
\[
    \sum_{i=N_0}^N i^\alpha \le \text{cost of non-rollbacked operations} = \Ot(n^{2+\alpha})\ .
\]
The worst case is if $N$ is polynomially larger than $N_0$, and thus
$\sum_{i=N_0}^N i^\alpha = \Omega(N^{1+\alpha})$. It follows that $N =
\Ot(n^{\frac{2+\alpha}{1+\alpha}})$.
Thus the total running time is $\Ot(n^2\cdot n^{\frac{2+\alpha}{1+\alpha}\alpha})$. Now, by Conjecture~\ref{conj:tc} we must have $\frac{2+\alpha}{1+\alpha}\alpha
= 1 - o(1)$. Solving this for $\alpha$ gives $\alpha = \frac{\sqrt{5}-1}{2} <
0.618$.

\subsection{Static}
\begin{proof}[Proof of Theorem~\ref{thm:sta_diam}]
    Let $G$ be an instance of the TC* problem with parameters $n,\Delta,p$ with
    $\Delta$ and $p$ bounded by $\Ot(1)$ and $m = \Ot(n^2)$ as in
    \cite{AbboudVY15}.

    For a parameter $0 < \gamma\le 1$ we
    create the graphs $H_{\gamma, 0}(G), \ldots, H_{\gamma, n^{1-\gamma}
    -1}(G)$ and solve the diameter problem on these graphs up to a $4/3-\eps$
    approximation. This is sufficient to distinguish between diameters $4$ and
    $3$ in all of the graphs. Now, if the diameter is $4$ in just one of the
    graph we answer that there exists a triplet of colors such that there is no
    triangle in $G$. This follows from Lemma~\ref{lem:diam_graph}.

    We note that the graphs $H_{\gamma, k}(G)$ each have $N = \Ot(n^{1+\gamma})$
    nodes and $M = \Ot(n^2)$ edges. Assume now that that any algorithm
    approximating the diameter within a factor of $4/3-\eps$ in time
    $O(N\sqrt{M}^{1-\eps'}) = O(n^{2+\gamma-\eps'})$ for any $\eps,\eps' > 0$
    exists. Since we create $n^{1-\gamma}$ instances of the problem this would
    imply an $O(n^{3-\eps''})$ algorithm for the TC* problem for some $\eps'' >
    0$.
\end{proof}

\subsection{Additive error}
To see Corollary~\ref{cor:diam_add} we fix $m^\alpha$ and consider TC* on a
graph $G$ with $N$ nodes and $M = \Ot(N^2)$ edges such that $M = m^{1-\alpha}$.
We then create $H_{1, 0}(G)$ and subdivide each edge into $m^\alpha$ nodes.
This graph now has $m$ nodes and edges and any algorithm solving $4/3-\eps$
diameter with additive error $O(m^\alpha)$ in time $M^{3/2-\eps'} =
m^{\frac{3}{2}(1-\alpha) - \eps''}$ time thus violates
Conjecture~\ref{conj:tc}.

\subparagraph*{Acknowledgements}
I would like to thank Amir Abboud and Virginia Vassilevska Williams for helpful
discussions and observations. I would also like to thank an anonymous referee
for pointing out the reduction from incremental matching to undirected flow.

%%
%% Bibliography
%%

\bibliographystyle{plain}
\bibliography{cond_lbs}

\end{document}